%%%%%%%%%%%%%%%%%%%%%%%%%%%%%%%%%%
%% Adaptive Hierarchical Clustering
%% Using Ordinal Queries
%%%%%%%%%%%%%%%%%%%%%%%%%%%%%%%%%%

\documentclass[11pt]{article}

%%%%%%%%%%%%%%%%%%%%%%%%%%%%%%%%%%
%% for SODA camera ready:
%\documentclass[twoside,leqno,twocolumn]{article}
%\usepackage{ltexpprt}
%%%%%%%%%%%%%%%%%%%%%%%%%%%%%%%%%%

%% packages:

\usepackage{amssymb,amsmath}
\usepackage{ifthen}
\usepackage{color}
\usepackage{xspace}
\usepackage{algorithm}
\usepackage[noend]{algorithmic}
\usepackage{graphicx}

%% unnecessary packages:
%\usepackage{amsopn}
%\usepackage[mathscr]{euscript}
%\usepackage{url}\urlstyle{rm}
%\usepackage{enumitem} % to change label of items

%% font commands:
%\newcommand{\B}[1]{\textbf{#1}}
%\newcommand{\I}[1]{\textit{#1}}
%\newcommand{\U}[1]{\underline{#1}}
%\newcommand{\BI}[1]{\B{\I{#1}}}
%\newcommand{\BU}[1]{\B{\U{#1}}}
%\newcommand{\IU}[1]{\I{\U{#1}}}

\newcommand{\Eat}[1]{} %% just comment!

%% math commands:

\newcommand{\Set}[1]{\ensuremath{\{ #1 \}}}
\newcommand{\SetCard}[1]{\ensuremath{| #1 |}}
\newcommand{\SpecSet}[2]{\ensuremath{\Set{#1 \mid #2}}}
\newcommand{\Ordinal}[1]{\ensuremath{{#1}^{\rm th}}}
\newcommand{\Ceil}[1]{\ensuremath{\left\lceil{#1}\right\rceil}}

\newcommand\Half[1][]{\ensuremath{\frac{\ifthenelse{\equal{#1}{}}{1}{#1}}{2}}\xspace}

\newcommand{\CCC}{\ensuremath{\mathcal{C}}\xspace}

\newcommand{\HHH}{\ensuremath{\mathcal{H}}\xspace}

\newcommand{\OOO}{\ensuremath{\mathcal{O}}\xspace}

\newcommand{\TTT}{\ensuremath{\mathcal{T}}\xspace}

\newcommand{\XXX}{\ensuremath{\mathcal{X}}\xspace}

\DeclareSymbolFont{AMSb}{U}{msb}{m}{n}
\DeclareMathSymbol{\N}{\mathord}{AMSb}{"4E}
\DeclareMathSymbol{\Z}{\mathord}{AMSb}{"5A}
\DeclareMathSymbol{\R}{\mathord}{AMSb}{"52}

\makeatletter
\newcommand*{\RNum}[1]{\expandafter\@slowromancap\romannumeral #1@}
\makeatother

%% algorithm:
%\newcommand{\MyAlgorithm}[3]%
%{\begin{algorithm}[ht]
%\caption{\sc #1}\label{#2}
%\begin{algorithmic}[1]
%\vspace{0.1cm}
%\baselineskip=1.1\baselineskip
%#3
%\end{algorithmic}\end{algorithm}}

\newcommand{\InsertAlgorithm}[3]%
{\begin{algorithm}[ht]
\caption{\sc #1}\label{#2}
\begin{algorithmic}[1]
\vspace{0.1cm}
\baselineskip=1.1\baselineskip
#3
\end{algorithmic}\end{algorithm}}

\newcommand{\InsertAlgorithmInOneColumn}[3]%
{\begin{algorithm*}[ht]
\caption{\sc #1}\label{#2}
\begin{algorithmic}[1]
\vspace{0.1cm}
\baselineskip=1.1\baselineskip
#3
\end{algorithmic}\end{algorithm*}}

\def\AlgAssign{\ensuremath{\leftarrow}\xspace}

%% figure:

%% new colors:
\definecolor{USCcardinal}{rgb}{0.598,0.000,0.000}
\definecolor{USCgold}{rgb}{0.996,0.797,0.000}

\usepackage[section]{./definitions}

%\addauthor[David]{dk}{blue}
%\addauthor[Ehsan]{ee}{red}

%%%%%%%%%%%%%%%%%%%%%%%%%%%%%%%%%%
%% for SODA camera ready:	
%% Only parts of package definitions
%% wich do not already exist i ltexpprt
%% are copied/pasted here.
%% They are also modified to follow SODA18 style
%% for theorems and proofs
%
%\def\EndOfEveryProof{\QED \endtrivlist}
%
%\def\QED{{\phantom{x}} \hfill \ensuremath{\rule{1.3ex}{1.3ex}}}
%
%\renewenvironment{proof}{\textit{Proof.}}{\EndOfEveryProof}
%
%\newcommand{\extraproof}[1]%
%{\rm \trivlist \item[\hskip \labelsep{\textit{Proof of #1.}}]}
%\def\endextraproof{\EndOfEveryProof}
%
%\def\emptyproof{\rm \trivlist \item[\hskip \labelsep{\textit{Proof.}}]}
%\def\endemptyproof{\endtrivlist}
%

%
%\newenvironment{rtheorem}[3][]{
%
%\bigskip
%
%\noindent \ifthenelse{\equal{#1}{}}{{\textsc{#2 #3}}}{{\textsc{#2 #3 (#1)}}}
%\begin{it}
%}{\end{it}}
%%%%%%%%%%%%%%%%%%%%%%%%%%%%%%%%%%

\usepackage[letterpaper,top=1in,left=1in,right=1in,bottom=1in]{geometry}

\begin{document}

\Eat{
\def\element{item\xspace}
\def\elements{items\xspace}
\def\Element{Item\xspace}
\def\Elements{Items\xspace}
}

\def\element{element\xspace}
\def\elements{elements\xspace}
\def\Element{Element\xspace}
\def\Elements{Elements\xspace}

\def\partition{group\xspace}
\def\partitions{groups\xspace}
\def\Partition{Group\xspace}
\def\Partitions{Groups\xspace}

\newcommand{\QuickSort}{\mbox{\textsc{QuickSort}}\xspace}
\newcommand{\InsertionSort}{\mbox{\textsc{InsertionSort}}\xspace}
\def\pathlike{path-like\xspace}

\newcommand{\QuickClustering}{\mbox{\textsc{QuickClustering}}\xspace}
\newcommand{\Merge}{\mbox{\textsc{Merge}}\xspace}
\newcommand{\FindSibling}{\mbox{\textsc{FindSibling}}\xspace}
\newcommand{\InsertionClustering}{\mbox{\textsc{InsertionClustering}}\xspace}
\newcommand{\SimulateVertexQuery}{\mbox{\textsc{SimulateVertexQuery}}\xspace}
\newcommand{\TreeWalk}{\mbox{\textsc{TreeWalk}}\xspace}

%% problem/algorithm name
\def\binarysearch{Binary Search\xspace}
\def\Binarysearch{Binary Search\xspace}
%\def\BinarySearch{Binary Search\xspace}

%% math variables

\def\ElementSym{\ensuremath{x}\xspace}
\def\ElementSymP{{\ensuremath{x'}}\xspace}
\def\ElementSymPP{{\ensuremath{x''}}\xspace}
\def\ElementSymL{{\ensuremath{x_L}}\xspace}
\def\ElementSymR{{\ensuremath{x_R}}\xspace}
\def\AllElements{\XXX}
\def\Err{\ensuremath{\delta}\xspace}
\def\PartitionA{\ensuremath{A}\xspace}
\def\PartitionB{\ensuremath{B}\xspace}
\def\PartitionC{\ensuremath{C}\xspace}
\def\PivotA{\ensuremath{x_\PartitionA}\xspace}
\def\PivotB{\ensuremath{x_\PartitionB}\xspace}
\def\TreeA{\Tree[\PartitionA]}
\def\TreeB{\Tree[\PartitionB]}
\def\TreeC{\Tree[\PartitionC]}
\def\TreeAB{\Tree[\PartitionA\PartitionB]}
\def\NumQuery{\ensuremath{k}\xspace}
\def\Target{\ensuremath{t}\xspace}
\def\Separator{\ensuremath{s}\xspace}
\def\Root{\ensuremath{r}\xspace}
\def\RootP{\ensuremath{r'}\xspace}
\def\OracleTriplet{\ensuremath{\OOO_{\text{triplet}}}\xspace}
\def\ConsistentSet{\ensuremath{S}\xspace}
\def\Null{\ensuremath{\perp}\xspace}
\def\MaxChildren{\ensuremath{d}\xspace}
\def\Dimension{\ensuremath{d}\xspace}
\def\Diameter{\ensuremath{D}\xspace}

%% math macros
\newcommand{\Vertex}[1][]{\ensuremath{%
\ifthenelse{\equal{#1}{}}{v}{v_{#1}}}\xspace}
\newcommand{\VertexP}[1][]{\ensuremath{%
\ifthenelse{\equal{#1}{}}{v'}{v'_{#1}}}\xspace}
\newcommand{\VertexPP}[1][]{\ensuremath{%
\ifthenelse{\equal{#1}{}}{v''}{v''_{#1}}}\xspace}
\newcommand{\VertexBar}[1][]{\ensuremath{%
\ifthenelse{\equal{#1}{}}{\overline{v}}{\overline{v}_{#1}}}\xspace}
\newcommand\Potential[2][]{\ensuremath{\Phi_{#2}%
\ifthenelse{\equal{#1}{}}{}{(#1)}}\xspace}
\newcommand{\Cluster}[1][]{\ensuremath{%
\ifthenelse{\equal{#1}{}}{C}{C_{#1}}}\xspace}
\newcommand{\ClusterP}[1][]{\ensuremath{%
\ifthenelse{\equal{#1}{}}{C'}{C'_{#1}}}\xspace}
\newcommand{\ClusterBar}[1][]{\ensuremath{%
\ifthenelse{\equal{#1}{}}{\overline{C}}{\overline{C}_{#1}}}\xspace}
\newcommand{\Clustering}[1][]{\ensuremath{%
\ifthenelse{\equal{#1}{}}{\HHH}{\HHH[#1]}}\xspace}
\newcommand{\SetFam}[1][]{\ensuremath{%
\ifthenelse{\equal{#1}{}}{\HHH}{\HHH[#1]}}\xspace}
\newcommand{\IndClust}[1]{\ensuremath{\HHH[#1]}\xspace}

\newcommand{\ElSet}[1][]{\ensuremath{%
\ifthenelse{\equal{#1}{}}{X}{X_{#1}}}\xspace}
\newcommand{\ElS}[1][]{\ensuremath{%
\ifthenelse{\equal{#1}{}}{x}{x_{#1}}}\xspace}
\newcommand{\ElSP}[1][]{\ensuremath{%
\ifthenelse{\equal{#1}{}}{x'}{x'_{#1}}}\xspace}
\newcommand{\ElSPP}[1][]{\ensuremath{%
\ifthenelse{\equal{#1}{}}{x''}{x''_{#1}}}\xspace}
\newcommand{\ElSL}{{\ensuremath{\ElS[L]}}\xspace}
\newcommand{\ElSR}{{\ensuremath{\ElS[R]}}\xspace}
\newcommand{\Tree}[1][]{\ensuremath{%
\ifthenelse{\equal{#1}{}}{\TTT}{\TTT_{#1}}}\xspace}
\newcommand{\TreeP}[1][]{\ensuremath{%
\ifthenelse{\equal{#1}{}}{\TTT'}{\TTT'_{#1}}}\xspace}
\newcommand{\Height}[1][]{\ensuremath{%
\ifthenelse{\equal{#1}{}}{h}{h_{#1}}}\xspace}
\newcommand{\TreeOpt}[1][]{\ensuremath{%
\ifthenelse{\equal{#1}{}}{\TTT^{*}}{\TTT^{*}_{#1}}}\xspace}
\newcommand{\TreeI}[2][]{\ensuremath{%
\ifthenelse{\equal{#1}{}}{\TTT[#2]}{\TTT^{#1}[#2]}}\xspace}
\newcommand{\Subtree}[2][]{\ensuremath{%
\ifthenelse{\equal{#1}{}}{\overline{\TTT}^{#2}}{\TTT_{#1}^{#2}}}\xspace}

\newcommand{\Query}[1][]{\ensuremath{%
\ifthenelse{\equal{#1}{}}{q}{q_{#1}}}\xspace}
\newcommand{\Response}[1][]{\ensuremath{%
\ifthenelse{\equal{#1}{}}{r}{r_{#1}}}\xspace}
\newcommand{\Counter}[2][]{\ensuremath{%
\ifthenelse{\equal{#1}{}}{c}{c_{#1}}(#2)}\xspace}
\newcommand{\Dis}[2]{\ensuremath{d(#1, #2)}\xspace}
\newcommand{\QC}[1]{\ensuremath{Q(#1)}\xspace}

\title{Adaptive Hierarchical Clustering Using Ordinal Queries}

\author{%
Ehsan Emamjomeh-Zadeh%
\thanks{%
Department of Computer Science,
University of Southern California,
emamjome@usc.edu} 
\and
David Kempe%
\thanks{%
Department of Computer Science,
University of Southern California,
dkempe@usc.edu}
}

\begin{titlepage}

\maketitle

\begin{abstract}
%\small\baselineskip=9pt
%% Abstract

In many applications of clustering
(for example, ontologies or clusterings of animal or plant species),
hierarchical clusterings are more descriptive
than a flat clustering.
A hierarchical clustering over $n$ \elements
is represented by a rooted binary tree
with $n$ leaves, each corresponding to one \element.
The subtrees rooted at interior nodes capture the clusters.
In this paper, we study active learning
of a hierarchical clustering using only ordinal queries.
An ordinal query consists of a set of three \elements,
and the response to a query reveals the two \elements
(among the three \elements in the query)
which are ``closer'' to each other than to the third one.
We say that \elements \ElS and \ElSP
are closer to each other than \ElSPP
if there exists a cluster containing \ElS and \ElSP,
but not \ElSPP.

When all the query responses are correct,
there is a deterministic algorithm that 
learns the underlying hierarchical clustering
using at most $n \log_2 n$ adaptive ordinal queries.
We generalize this algorithm to be robust in a model in which
each query response is correct independently with probability $p > \Half$,
and adversarially incorrect with probability $1 - p$.
We show that in the presence of noise,
our algorithm outputs the correct hierarchical clustering
with probability at least $1 - \Err$,
using $O(n \log n + n \log(1/\Err))$ adaptive ordinal queries.
For our results, adaptivity is crucial:
we prove that even in the absence of noise,
every non-adaptive algorithm requires $\Omega(n^3)$ ordinal queries
in the worst case.

\end{abstract}

\end{titlepage}

%%%%%%%%%%%%%%%%%%%%%%%%%%%%%%%%%%%%%%%%%%%%%%%%%%

%% Introduction

\section{Introduction} \label{sec:introduction}

One of the most useful and frequent computational tasks
in interacting with a set of \elements is to \emph{cluster} them,
inferring groups (\emph{clusters}) of \elements
that are ``similar to each other.''
Clustering is often an essential step in ``learning'' facts
about the items, inferring generative models,
deriving scientific or other insights,
or letting users interact with large data sets meaningfully.
The wide range of applications has led to an enormous amount of research
spanning many disciplines, driven by different applications,
different notions of ``similarity,'' different constraints
on allowable groups of \elements,
and different models of user interaction,
to name just a few.

Two particularly common restrictions
on the allowable groups are \emph{flat clustering},
in which each item belongs to exactly (or at most) one cluster,
and \emph{hierarchical clustering},
in which the groups form a nested structure.
Both restrictions have myriad natural applications.
Among the prototypical applications of hierarchical clusterings 
are hierarchies of animal or plant species
\cite{felsenstein:2004:phylogenies},
ontologies of words or other entities,
or social networks, viewed at different levels of granularity.
Hierarchical clusterings are particularly well suited
for human data exploration, in that they allow a person
to zoom in/out to understand the \elements and their relationships
at different levels. In this paper, we are interested in
inferring a hierarchical clustering
on a set \AllElements of $n$ \elements.

A second key question is how the similarities between \elements
are defined and obtained.
In much of the work on clustering, the similarities are derived
from observable features of the \elements
(such as their position in an observable metric space),
and finding a clustering is cast as
an optimization problem for a suitably chosen objective function.
Such an approach frequently identifies clusters quite similar
--- but not identical --- to the ground truth.
An alternative approach is to interact with human users,
and obtain input to guide the search for the true clustering.
These interactions are typically expensive;
thus, minimizing the number of interactions with human users
is often a primary goal in designing algorithms. 
When interacting with humans, it is also important to keep in mind
that even when there is an objective meaning to numerical similarity values,
humans are notoriously bad at estimating such numerical quantities.
This has led to a search for suitable models of interaction,
advocated recently in
\cite{balcan-blum:2008:split-merge,%
awasthi-zadeh:2010:supervised-clustering,%
awasthi-balcan-voevodski:2017:local-algorithm-journal,%
vikram-dasgupta:2016:interactive-hierarchical-clustering},
in which human users are only given queries
they may reasonably be in a position to answer.

Inspired by the equivalence query model for learning a classifier
\cite{angluin:1988:queries-concept}, 
Balcan and Blum~\cite{balcan-blum:2008:split-merge}
and Awasthi et al.~\cite{awasthi-zadeh:2010:supervised-clustering,%
awasthi-balcan-voevodski:2017:local-algorithm-journal}
defined a model of interaction (specifically for flat clustering)
in which an algorithm repeatedly proposes a clustering,
and a user can request corrections, such as splitting or merging clusters.
In this model, the user has a lot of leeway in determining
what feedback the algorithm receives.
The \emph{ordinal query} model
(\cite{%aho-sagiv-szymanski-ullman:1981:inferring-tree,%
schultz-joachims:2003:relative-comparisons,%
jain-jamieson-nowak:2016:ordinal-embedding,%
vikram-dasgupta:2016:interactive-hierarchical-clustering,%
kendall-gibbons:1990:rank-correlation}, for example)
is more reminiscent of traditional ``active learning'' models.
Here, a user is given queries comprising triplets of \elements,
and asked to identify the pair from each triplet that is closest.
For example, a user may be asked which pair out of
\{lion, dog, shark\} is most similar.

One well-known application of this model
is to discover the phylogeny trees of a set of species.
A phylogeny tree over a set of species
is a rooted binary tree whose leaves correspond to the species,
while internal nodes capture the evolutionary history.
It is often assumed that there is a black box
which can answer ordinal queries on a set of three DNA sequences
and infer which two are biologically closer to each other;
under this assumption,
the problem of finding the underlying phylogeny
is similar to the problem we study in this paper~%
\cite{kannan-lawler-warnow:1996:phylogeny-triplet,%
brown-truszkowski:2011:phylogeny-quartet,%
brown-truszkowski:2011:phylogeny-quartet-practical}.
Other topics such as social computing and a computational lens
on political and other decision making processes have led to
increasing interest in ordinal query responses as well.
As a result, ordinal queries have
been analyzed in the contexts of clustering or inferring distances
\cite{%aho-sagiv-szymanski-ullman:1981:inferring-tree,%
schultz-joachims:2003:relative-comparisons,%
agarwal-wills-cayton-lanckriet-kriegman-belongie:2007:embedding,%
tamuz-liu-belongie-shamir-kalai:2011:learning-kernel,%
jain-jamieson-nowak:2016:ordinal-embedding,%
vikram-dasgupta:2016:interactive-hierarchical-clustering},
in social choice scenarios such as voting
\cite{anshelevich-bhardwaj-postl:2015:social-choice,%
anshelevich:2016:ordinal},
causal inference~\cite{pearl-tarsi:1986:strucuting},
and political and group decision making
\cite{goel-lee:2012:triadic,%
goel-lee:2014:small-group-interaction}.
\emph{Learning of a hierarchical clustering
from ordinal queries is the topic of the present work.}

We assume that there is
an unknown ground truth hierarchical clustering to be learned.
It is represented by an (unknown) tree \TreeOpt
whose $n$ leaves are the \elements.
For each node \Vertex of the tree,
the leaves in the subtree rooted at \Vertex
form a cluster in the hierarchical clustering.
For most of the paper
(except Section~\ref{sec:generalized-hierarchical-clustering}),
we assume that \TreeOpt is binary.
This implies that for each triplet \Set{\ElS, \ElSP, \ElSPP}
of leaves/\elements,
there is a unique pair that shares a closer common ancestor
than any other pair:\footnote{This does not hold in trees of higher degree.}
this pair is the correct \emph{response} to the query \Set{\ElS, \ElSP, \ElSPP}.
The goal is to learn \TreeOpt (or an isomorphic tree)
using few ordinal queries.

For some of this paper,
we assume that the answers to all ordinal queries are correct.
However, in particular in light of the motivation
of interaction with humans,
it is natural to assume that responses may be noisy.
In the \emph{independent noise mode}
(which we use in Section~\ref{sec:insertion-noisy}),
each response is correct independently with
probability $p > \Half$, and adversarially incorrect otherwise.

It is not difficult to show (see Section~\ref{sec:ordinal-def})
that even without incorrect responses,
when the ordinal queries are non-adaptive
(i.e., have to be chosen ahead of time),
$\Theta(n^3)$ queries are necessary and sufficient.
We therefore focus on adaptive queries.

For adaptive queries, when \TreeOpt has height $O(\log n)$,
there is a simple top-down algorithm to learn \TreeOpt
using $O(n \log n)$ adaptive ordinal queries
(with no incorrect responses).
This algorithm can be considered as a special case of an
algorithm of Pearl and Tarsi~\cite{pearl-tarsi:1986:strucuting}.
Their algorithm was designed for a different context, namely,
discovering underlying causal trees over visible random variables.
However, as a preliminary step, they studied a model essentially
identical to the ordinal query model.
Their algorithm learns the underlying tree (of any height)
using $O(n \log n)$ queries.
In Section~\ref{sec:insertion-clean}, we will review their algorithm
as a point of departure for our noise-tolerant algorithm in
Section~\ref{sec:insertion-noisy}.
A matching lower bound of $\Omega(n \log n)$
follows easily from a counting argument
like the one for sorting (see Section~\ref{sec:quick-sort}).

\subsection*{Our Results and Techniques}

The analogy with sorting is also useful in the design of algorithms
for inferring a hierarchy.
We begin (in Sections~\ref{sec:quick-sort} and \ref{sec:insertion-clean})
by studying the model without incorrect responses.

Our first (and simplest) algorithm (in Section~\ref{sec:quick-sort})
is modeled after randomized \QuickSort.
It picks \emph{two} random pivot \elements \PivotA, \PivotB,
and partitions the \elements into those closest to \PivotA,
those closest to \PivotB,
and those which are further from both pivots
than the two pivots are from each other.
We prove that with constant probability,
each of the three sets constitutes
only a constant fraction of the original set.
Then, we present an algorithm that, 
once the hierarchies for all three sets have been (recursively) computed,
can merge them using a linear number of queries.
As a result, the algorithm uses $O(n \log n)$ ordinal queries.

In preparation for the noise-tolerant algorithm in
Section~\ref{sec:insertion-noisy},
in Section~\ref{sec:insertion-clean},
we review\footnote{We would like to thank Daniel Hsu for
pointing us to the article \cite{pearl-tarsi:1986:strucuting}.}
the algorithm of Pearl and Tarsi
\cite{pearl-tarsi:1986:strucuting},
which can be considered as a variant of \InsertionSort. 
% and to improve the constant in the number of queries,
The main observation
guiding the algorithm in Section~\ref{sec:insertion-clean}
is that, given the true hierarchy over $i$ \elements,
even when the tree is arbitrarily unbalanced, 
a new \element can be inserted using at most
$1 + \log_2 i$ ordinal queries,
giving an overall query complexity of $n \log_2 n$.
The algorithm to insert an \element is a close variant
of the generalization of \binarysearch to trees or arbitrary graphs
\cite{onak-parys:2006:tree-vertex-linear,%
mozes-onak-weimann:2008:tree-edge-linear,%
2016:binary-search}.
While the work on \binarysearch in trees/graphs
assumes a different query model called \emph{vertex queries}
(see Section~\ref{sec:vertex-queries}),
the algorithm is easily adapted.

Next, we turn our attention to the noisy model.
It is easy to see that any algorithm
that learns the true hierarchy in the absence of noise
can be made robust to noise by repeating each ordinal query
$\Theta(\log n)$ times and using a majority output.
Then, Hoeffding's inequality and union bounds guarantee that
the algorithm succeeds with high probability.
However, this approach increases the
number of queries by a factor of $\Theta(\log n)$.
In Section~\ref{sec:insertion-noisy},
we show how to avoid the worse dependence on $n$
for the \InsertionSort-based algorithm,
and only increase the number of queries by a constant factor
depending solely on the correctness probability $p$.

The robustness is achieved by replacing
the \binarysearch subroutine with a robust version.
While robust \binarysearch algorithms
in a vertex query model in the presence of noise
had been given in work of
Feige et al.~\cite{feige-raghavan-peleg-upfal:1994:noisy}
and Emamjomeh-Zadeh et al.~\cite{2016:binary-search},
neither algorithm itself has strong enough guarantees for our purposes:
the algorithm of \cite{feige-raghavan-peleg-upfal:1994:noisy} 
requires a number of queries linear in the diameter of the tree,
while the algorithm of \cite{2016:binary-search} has a dependency on
the desired error guarantee that would lead to a query complexity of
$\Theta(\log ^2 n)$ in our context.
Our main contribution in Section~\ref{sec:insertion-noisy}
is a hybrid algorithm combining the first stage of the algorithm
from \cite{2016:binary-search} with the algorithm of
\cite{feige-raghavan-peleg-upfal:1994:noisy} to obtain a number of
queries that is logarithmic in $n$.
Specifically, we show how, given the tree over a subset of \elements,
to correctly insert a new \element into the tree
with probability $1 - \Err$, using $O(\log n + \log (1/\Err))$ queries.
This leads to an algorithm that learns the correct hierarchy \TreeOpt
with probability at least $1-\Err$ using
$O(n (\log n + \log(1/\Err)))$ ordinal queries.

\subsection*{Related Work}

The algorithmic problem of inferring the evolutionary history
(known as \emph{phylogeny}) of a set of animal or plant species
(see, e.g., \cite{aho-sagiv-szymanski-ullman:1981:inferring-tree,%
kannan-lawler-warnow:1996:phylogeny-triplet,%
erdos-steel-szekely-warnow:1999:phylogeny-quartet-a,%
brodal-fagerberg-pedersen-ostlin:2001:phylogeny-higher-degree,%
felsenstein:2004:phylogenies,%
gronau-moran-snir:2008:phylogeny-short-edges,%
wu-kao-lin-you:2008:pylogeny-noisy,%
truszkowski:2013:thesis})
is closely related to the problem of learning an underlying hierarchy
over a set of elements.
A (rooted) phylogeny over $n$ species is defined as a rooted binary tree
with $n$ leaves corresponding to the species.
The internal nodes represent
past extinct ancestors or hypothesized ones.
Given (the DNA sequence of) three species,
an ``experiment'' determines which two
are (biologically) closer to each other than to the third one~%
\cite{kannan-lawler-warnow:1996:phylogeny-triplet,%
wu-kao-lin-you:2008:pylogeny-noisy}.
Hence, discovering the phylogeny using a minimum number of experiments
is equivalent to the problem we study here.
Alternatively, the phylogeny of $n$ species
is sometimes defined as an unrooted tree with $n$ leaves
in which each internal node has degree exactly $3$.
In this model, queries are in the form of \emph{quartet} queries:
each query consists of $4$ species,
and the response to it indicates which pair can be separated
from the other pair by removing one edge in the underlying phylogeny.
Although there may not be a direct (algorithmic) reduction
between the two models, most ideas and algorithms that are proposed
in the literature for one of the models
directly carry over for the other one as well.

Assuming that all the query responses are correct,
Kannan, Lawler and Warnow~\cite{kannan-lawler-warnow:1996:phylogeny-triplet}
discuss the trade-off between the query complexity of the algorithm
and its running time. 
All three algorithms they present
achieve an $O(n \log n)$ upper bound on the number of queries;
they show that by increasing the query complexity
by a constant factor, one can achieve faster algorithms in return.
Their algorithms are similar to
the algorithm we present in Section~\ref{sec:insertion-clean}.
In this paper, we only focus on the query complexity
and do not analyze the running time of our algorithms
(except for noting that they are obviously polynomial-time).

The algorithms which are used in computational biology
to answer (triplet or quartet) queries
are not fully reliable;
hence, it is necessary to account for noise in the 
query responses.
Wu et al.~\cite{wu-kao-lin-you:2008:pylogeny-noisy}
studied the problem of recovering an underlying unrooted phylogeny
using quartet queries under an independent noise model.
They presented an algorithm that achieves an $O(n^3 \log n)$ upper
bound on the number of quartet queries.

Brown and Truszkowski%
~\cite{brown-truszkowski:2011:phylogeny-quartet,%
brown-truszkowski:2011:phylogeny-quartet-practical}
improved this result by designing a randomized algorithm
with a query complexity of $O(n \log n)$,
both in expectation and with high probability.
The running time of their algorithm
is also $O(n \log n)$ in expectation and with high probability.
Their algorithm is very similar%
\footnote{The original version of our paper was submitted and accepted
before we learned about
\cite{brown-truszkowski:2011:phylogeny-quartet} and
\cite{brown-truszkowski:2011:phylogeny-quartet-practical}.
We thank the authors for informing us about these publications.}
to our algorithm in Section~\ref{sec:insertion-noisy}:
They insert the elements one by one into an existing tree,
using an idea inspired by \cite{feige-raghavan-peleg-upfal:1994:noisy},
and guarantee (as we do) that each insertion is correct
with high enough probability.
They prove that if the elements are
randomly permuted at the beginning of the algorithm
(i.e., the order in which they are inserted is uniformly random),
then each insertion takes $O(\log n)$ queries
both in expectation and with high probability.
Our algorithm, in contrast, is deterministic
and ensures that each insertion uses
no more than $O(\log n)$ queries.
\cite{brown-truszkowski:2011:phylogeny-quartet}
assume that noise is \emph{permanent},
which means that if the algorithm asks the same query multiple times,
the responses are the same.
This assumption is necessary for their application,
because query responses are obtained
based on fixed DNA sequences,
using fixed algorithms in computational biology.
The algorithm of \cite{brown-truszkowski:2011:phylogeny-quartet}
is designed to recover the exact phylogeny
without ever querying the same quartet more than once.
In our model, on the other hand, each query response is incorrect
with probability $1 - p$, independently of all previous queries
(even if the same query has been asked before).
When queries are triplet (rather than quartet) queries,
if the noise is permanent, it is impossible
to recover the exact structure of the underlying rooted phylogeny
(or hierarchical clustering) with high probability.
As pointed out by~%
\cite{brown-truszkowski:2011:phylogeny-quartet,%
brown-truszkowski:2011:phylogeny-quartet-practical},
the independent noise model is not, in general,
a natural model for inferring phylogenetic trees.
The reason is that query responses are based on the DNA
sequences of the species, and hence, they are not independent.

Vikram and Dasgupta~\cite{vikram-dasgupta:2016:interactive-hierarchical-clustering}
addressed the problem of incorporating hard ordinal constraints
into algorithms for hierarchical clusterings.
However, contrary to our model, they assume that
the algorithm is provided information regarding the geometry of the data,
and the hard constraints given by the user are
additional information to improve the accuracy of the outcome.

A hierarchical clustering over $n$ \elements
can be seen as a metric (or in fact, ultra-metric)
over the \elements.\footnote{For instance,
let the distance between every pair of \elements
be the size of the smallest cluster in the hierarchy
containing both of them.}
In this sense, learning a hierarchical clustering is related
to a recent line of work (mostly in the machine learning community)
\cite{agarwal-wills-cayton-lanckriet-kriegman-belongie:2007:embedding,%
jamieson-nowak:2011:embedding,%
jamieson-nowak:2011:active-ranking,%
McFee-lanckriet:2011:multi-model-similarity,%
tamuz-liu-belongie-shamir-kalai:2011:learning-kernel,%
VanDerMaaten-weinberger:2012:stochastic-triplet-embedding,%
kleindessner-vonLuxburg:2014:embedding-uniqueness,%
jain-jamieson-nowak:2016:ordinal-embedding}
on constructing a geometric interpretation of data
based on ordinal information.
In particular, the ``non-metric multidimensional scaling'' problem
\cite{agarwal-wills-cayton-lanckriet-kriegman-belongie:2007:embedding,%
jamieson-nowak:2011:embedding,%
kleindessner-vonLuxburg:2014:embedding-uniqueness,%
jain-jamieson-nowak:2016:ordinal-embedding} is formulated as
finding an embedding of the \elements
into a (low-dimensional) Euclidean space,
such that the distances between the points
satisfy a given set of ordinal constraints.
Many of the algorithms on the non-metric multidimensional scaling
problem are in the non-adaptive model (e.g.,
\cite{agarwal-wills-cayton-lanckriet-kriegman-belongie:2007:embedding,%
jain-jamieson-nowak:2016:ordinal-embedding}).
Jamieson and Nowak~\cite{jamieson-nowak:2011:embedding}
point out that an adaptive algorithm may be able to
find an embedding of the \elements into the low-dimensional space
using only a few ordinal queries,
in a way that the embedding satisfies
all the $\Theta(n^3)$ possible queries.
They proved a lower bound of $\Omega(\Dimension n \log n)$
where \Dimension is the dimension,
and they conjectured that this lower bound is tight.

In the context of the \emph{multinomial logit model},
a finite set of choices \CCC satisfies the
``Independence of Irrelevant Alternatives'' (IIA) property%
~\cite{McFadden-tye-train:1977:multinomial-logit}
if for every two choices $c_1, c_2 \in \CCC$,
the ratio of the probabilities of being chosen
is a constant, regardless of which other choices (from \CCC)
are available or unavailable. That is,
for every two subsets of choices $\CCC', \CCC'' \subseteq \CCC$
with $\Set{c_1, c_2} \subseteq \CCC', \CCC''$,
\begin{align*}
\frac{\Pr[c_1 \text{ is chosen } | \; \CCC']}
{\Pr[c_2 \text{ is chosen } | \; \CCC']} =
\frac{\Pr[c_1 \text{ is chosen } | \; \CCC'']}
{\Pr[c_2 \text{ is chosen } | \; \CCC'']}.
\end{align*}
In many applications where this strong assumption is not met,
the \emph{nested logistic regression} model%
~\cite{benson-kumar-tomkins:2016:nested-IIA} is used.
In the nested logistic regression model,
the choices are the leaves of an underlying (usually unknown) rooted tree
(which need not be binary).
A pair of choices $c_1, c_2$ is independent
of a third choice $c_3$
if there is a (rooted) subtree containing $c_1$ and $c_2$, but not $c_3$.
Using the terminology of this paper,
the pair $c_1, c_2$ is independent of $c_3$ if
$c_1$ and $c_2$ are closer to each other than to $c_3$.

In a recent paper,
Benson et al.~\cite{benson-kumar-tomkins:2016:nested-IIA}
considered the problem of algorithmically recovering
the underlying tree structure from an observed set of choices.
In particular, they showed that using $O(n^2)$ adaptive ordinal queries,
one can recover the tree.
In fact, their algorithm is a deterministic version
of our \QuickSort-based algorithm from Section~\ref{sec:quick-sort}.
As we discuss in Section~\ref{sec:generalized-hierarchical-clustering},
for trees of very high degree,
the performance of the generalization of our algorithm
deteriorates to $\Theta(n^2)$,
matching the guarantee of
\cite{benson-kumar-tomkins:2016:nested-IIA}.
However, for low-degree trees, and in particular,
for binary trees, our algorithm is provably more efficient.

Yet another application of the algorithmic problem
is \emph{inferring underlying causal structures}.
Pearl and Tarsi~\cite{pearl-tarsi:1986:strucuting}
introduced a tree-like model
where the leaves are observable random variables
and the internal nodes are hidden causes.
They considered a statistical test over the observable variables
(i.e., leaves) which is equivalent to the notation of ordinal query
we discuss in this article.
They showed that using $O(n \log n)$ statistical tests,
one can discover the underlying causal tree
if the degrees are bounded by a constant.
We discuss their algorithm in Section~\ref{sec:insertion-clean}
as the foundation of our main result in Section~\ref{sec:insertion-noisy}.

Feige et al.~\cite{feige-raghavan-peleg-upfal:1994:noisy}
studied several problems under the independent noise model.
In particular, they showed that for the classical \binarysearch problem,
if comparisons are noisy
(i.e., each comparison is correct independently with probability $p > \Half$
and flipped otherwise),
there exists an algorithm that uses $O(\log n)$ queries
and succeeds with high probability.
Ben-Or and Hassidim~\cite{BenOr-hassidim:2008:noisy-binary-search}
improved this result and obtained
the information-theoretically optimal query complexity. 
Braverman and Mossel~\cite{braverman-mossel:2008:noisy-sorting}
studied the sorting problem under an independent noise model
in a non-adaptive setting.
They proposed an algorithm that,
given the noisy comparisons for every pair of \elements,
finds the maximum-likelihood permutation.
They also showed that the maximum-likelihood permutation
is ``close'' to the ground truth.

%% Definitions and Preliminaries

\section{Definitions and Preliminaries} \label{sec:preliminaries}

\subsection{Hierarchical Clusterings}

Let \AllElements be a set of $n \geq 2$ \elements.
A \emph{hierarchical clustering}
of \AllElements can
be equivalently characterized in two ways:
\begin{enumerate}
\item A laminar family \SetFam of subsets\footnote{%
Recall that a family of sets is
\emph{laminar} if $A, B \in \SetFam$ and $A \cap B \neq \emptyset$
implies that $A \subseteq B$ or $B \subseteq A$.}
(called \emph{clusters}) of \AllElements,
with the properties that $\emptyset \notin \SetFam$, 
$\AllElements \in \SetFam$ and
$\Set{\ElS} \in \SetFam$ for all $\ElS \in \AllElements$.
\item A rooted tree \Tree
whose leaves are the \elements $\ElS \in \AllElements$.
\end{enumerate}

The equivalence is obvious:
the clusters in \SetFam exactly correspond to vertices \Vertex in \Tree,
or more specifically, to the set of leaves in the subtree rooted at \Vertex.
We call the set of all clusters corresponding to rooted subtrees of \Tree
the \emph{hierarchical clustering corresponding to \Tree},
and denote it by \IndClust{\Tree}.
When there is no risk of confusion,
we also refer to \Tree itself as a hierarchical clustering.

We primarily use the representation in terms of the tree \Tree.
In keeping with much of the prior literature,
(\cite{dasgupta:2016:hierarchy-cost,%
clauset-moore-newman:2008:link-prediction,%
felsenstein:2004:phylogenies}, for example)
we focus
(except in Section~\ref{sec:generalized-hierarchical-clustering})
on the case when \Tree is binary.
In terms of the representation \SetFam, this means that for each
$\ElSet \in \SetFam$ with $\SetCard{\ElSet} \geq 2$,
there exists a set $A \subsetneq \ElSet$
with $A, \ElSet \setminus A \in \SetFam$.
We assume without loss of generality that \Tree has no internal node
with one child,
since such a node could be removed without changing
the corresponding family of sets.

There is an unknown ground truth hierarchical clustering \TreeOpt,
which an algorithm is to recover using ordinal queries.
The algorithm succeeds if it finds a rooted binary tree \Tree
with $\IndClust{\Tree} = \IndClust{\TreeOpt}$;
this corresponds to the rooted binary trees \Tree and \TreeOpt being isomorphic
(when the ordering of left vs.~right subtrees is immaterial).
In this case, we also write $\Tree \equiv \TreeOpt$.

For a set $S \subseteq \AllElements$ with
$\SetCard{S} \geq 2$,
the \emph{induced hierarchy} \SetFam[S] is defined in the obvious way:
$\SetFam[S] = \SpecSet{\ElSet \cap S}{\ElSet \in \SetFam}$.
We define the \emph{induced tree} \TreeI{S} as follows:
remove from \Tree all leaves not in $S$,
then repeatedly remove all internal nodes without children,
and contract all nodes with a single child.\footnote{%
If the root has one child, it is removed,
and its child becomes the root of the tree.}

\begin{proposition} \label{prop:induced}
\SetFam[S] is the hierarchy corresponding to the tree induced by $S$,
i.e., $\SetFam[S] = \IndClust{\TreeI{S}}$.
\end{proposition}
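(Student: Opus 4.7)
The plan is to identify both $\SetFam[S]$ and $\IndClust{\TreeI{S}}$ explicitly as families of subsets of $S$ indexed by nodes of $\Tree$, and then show the two indexings produce the same collection. For each node $u$ of $\Tree$, let $L(u)$ denote the set of leaves in the subtree of $\Tree$ rooted at $u$. By the correspondence between $\Tree$ and $\SetFam$, we have $\SetFam = \{L(u) : u \in \Tree\}$, and so (after discarding empty sets) $\SetFam[S] = \{L(u) \cap S : u \in \Tree,\ L(u) \cap S \neq \emptyset\}$.

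Next, I would analyze exactly which nodes of $\Tree$ survive the construction of $\TreeI{S}$. After deleting leaves outside $S$, removing childless internal nodes, and contracting single-child nodes, a node $u$ remains as a node of $\TreeI{S}$ if and only if either (a) $u$ is a leaf of $\Tree$ with $u \in S$, or (b) $u$ is internal and at least two of its children's subtrees each contain some leaf of $S$. Moreover, for every surviving $u$, the leaves of $\TreeI{S}$ in the subtree rooted at $u$ are exactly $L(u) \cap S$. This gives $\IndClust{\TreeI{S}} = \{L(u) \cap S : u \text{ survives}\}$. One direction of the proposition is then immediate: every surviving node was already a node of $\Tree$, so $\IndClust{\TreeI{S}} \subseteq \SetFam[S]$.

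For the reverse inclusion, I would take an arbitrary non-empty $A = L(u) \cap S \in \SetFam[S]$ and exhibit a surviving node $u^*$ with $L(u^*) \cap S = A$ by descent. If $u$ already satisfies (a) or (b), set $u^* = u$. Otherwise $u$ is internal and has exactly one child $c$ with $L(c) \cap S \neq \emptyset$; since all other children of $u$ contribute no $S$-leaves, $L(c) \cap S = L(u) \cap S = A$. Replace $u$ by $c$ and repeat. The depth strictly increases with each step, so the process terminates at some node satisfying (a) or (b), i.e., at a surviving node $u^*$ with $L(u^*) \cap S = A$.

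I do not expect any step to be a real obstacle; the only slightly delicate point is the characterization of which nodes survive the prune-and-contract construction (especially the need to argue that the descent above never gets stuck at a contracted node), and the bookkeeping of ignoring the empty set so that the laminar families on $S$ match exactly.
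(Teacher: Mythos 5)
Your proof is correct. The paper itself states Proposition~\ref{prop:induced} without proof, treating it as an immediate consequence of the definitions, and your argument is precisely the natural one the authors leave implicit: you characterize which nodes of \Tree survive the prune-and-contract construction of \TreeI{S}, observe that each survivor $u$ still has leaf set $L(u)\cap S$, and handle the only nontrivial direction by descending through the unique $S$-relevant child of each contracted node until a surviving node with the same intersection is reached.
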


In reasoning about trees, we use the following notation.
For an \emph{internal} vertex \Vertex,
let $\Subtree[L]{\Vertex}, \Subtree[R]{\Vertex}$
be the subtrees rooted at \Vertex's left and right children, and let
$\Subtree{\Vertex} = \Tree \setminus (\Subtree[L]{\Vertex} \cup \Subtree[R]{\Vertex})$
be the set of all other vertices (so $\Vertex \in \Subtree{\Vertex}$).
As is standard for \emph{rooted} trees, we consider
the \emph{degree} of a node to be its number of children,
i.e., we do not count its parent.

\subsection{Ordinal Queries} \label{sec:ordinal-def}

An \emph{ordinal} query is a set of three distinct \elements
$\Set{\ElS, \ElSP, \ElSPP} \subseteq \AllElements$.
The response to the query reveals which two \elements
are ``closest'' to each other.
We say that \ElS and \ElSP are closer to each other than to \ElSPP
(with respect to \Tree) 
if there exists a cluster (subtree rooted at an internal node)
in \Tree which contains \ElS and \ElSP, but not \ElSPP.
In other words, for every set of three \elements,
the two \elements which have the lowest common ancestor are closest.
Because \Tree is a binary tree,
such a cluster --- and hence a valid response --- always exists
and is unique.
% the query response for each triplet is unique.

\begin{proposition} \label{prop:tree-unique}
Two hierarchical clusterings \Tree, \TreeP of the same \elements \AllElements
satisfy $\Tree \equiv \TreeP$ if and only if
for every triplet $\Set{\ElS, \ElSP, \ElSPP}$,
the query responses with respect to \Tree and \TreeP are the same.
\end{proposition}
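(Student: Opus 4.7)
My plan is to prove both directions separately. The forward direction, $\Tree \equiv \TreeP \Rightarrow$ same responses, is immediate from the definition of ``closer'': isomorphic trees yield identical laminar families $\IndClust{\Tree} = \IndClust{\TreeP}$, and since the response to a triplet is defined as the pair that co-occurs in some cluster exclusive of the third element, the responses must coincide.

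The substantive direction is the converse, and here my plan is to reduce it to a purely combinatorial characterization of clusters from triplet responses. Specifically, I will show that a set $C \subseteq \AllElements$ with $2 \le \SetCard{C} \le n-1$ is a cluster of $\Tree$ if and only if for every $\ElS, \ElSP \in C$ and every $\ElSPP \in \AllElements \setminus C$, the response of $\{\ElS, \ElSP, \ElSPP\}$ in $\Tree$ is $\{\ElS, \ElSP\}$. Once this characterization is in hand, the hypothesis (that all triplet responses of $\Tree$ and $\TreeP$ agree) will immediately force $\IndClust{\Tree} = \IndClust{\TreeP}$ (the singletons and $\AllElements$ itself belong to both), and then the standard bijection between rooted binary trees with no degree-one internal nodes and laminar families will give $\Tree \equiv \TreeP$.

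To establish the characterization, the ``cluster $\Rightarrow$ condition'' direction follows straight from the LCA interpretation: if $C$ is the leaf-set of the subtree rooted at an internal node $v$, then for any $\ElS, \ElSP \in C$ the LCA of $\ElS, \ElSP$ lies strictly inside the $v$-subtree, while the LCA of any pair containing $\ElSPP \notin C$ is a strict ancestor of $v$. For the converse, I would let $v$ be the lowest common ancestor in $\Tree$ of all elements of $C$, with children $v_L, v_R$. By minimality of $v$, each subtree contains at least one element of $C$, so I can pick $\ElS \in C$ below $v_L$ and $\ElSP \in C$ below $v_R$, which witness $v$ as their LCA. If some leaf $\ElSPP$ in the $v$-subtree were missing from $C$, it would lie (WLOG) below $v_L$, making the LCA of $\{\ElS, \ElSPP\}$ a proper descendant of $v$; then $\{\ElS, \ElSPP\}$ would be the closest pair in the triplet, contradicting the hypothesis. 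Hence $C$ must equal the entire leaf-set of the $v$-subtree and is a cluster.

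The argument is mostly structural, and I do not anticipate any serious obstacle. The only delicate point is the converse direction of the characterization: one has to choose $\ElS$ and $\ElSP$ so that they actually witness $v$ as their LCA (which is why invoking minimality of $v$ and splitting across $v_L, v_R$ is essential), and then verify the contradiction carefully for the hypothesized ``intruder'' $\ElSPP$.
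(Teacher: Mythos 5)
Your proof is correct. The paper actually states this proposition without proof (treating it as obvious), so there is no official argument to compare against; your write-up fills that gap soundly. The forward direction is indeed immediate, and your characterization of clusters from triplet responses --- $C$ is a cluster iff every triplet $\{\ElS,\ElSP,\ElSPP\}$ with $\ElS,\ElSP\in C$, $\ElSPP\notin C$ answers $\{\ElS,\ElSP\}$ --- is exactly the right reduction: it expresses $\IndClust{\Tree}$ purely in terms of query responses, and since the paper \emph{defines} $\Tree\equiv\TreeP$ as $\IndClust{\Tree}=\IndClust{\TreeP}$, equality of the laminar families finishes the argument without even needing to re-invoke the tree/laminar-family bijection. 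The one delicate step, picking $\ElS$ and $\ElSP$ on opposite sides of the LCA $v$ of $C$ so that an ``intruder'' leaf $\ElSPP$ of the $v$-subtree forces a contradicting response, is handled correctly and does use that $\Tree$ is binary (so that $v$ has exactly two children and responses are well defined), which is consistent with the paper's standing assumption.
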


Thus, using at most $\binom{n}{3} = \Theta(n^3)$ ordinal queries,
an algorithm can uniquely recover the underlying hierarchy \TreeOpt.
(The actual algorithm is a simple bottom-up algorithm.)
For non-adaptive algorithms, the following easy theorem
(proved in Appendix~\ref{sec:non-adaptive-lower-bound})
gives a matching lower bound.

\begin{theorem} \label{thm:hardness-non-adaptive}
Every non-adaptive algorithm requires $\Omega(n^3)$ ordinal queries
to learn the hierarchical clustering over $n$ \elements
(even in the absence of noise).
\end{theorem}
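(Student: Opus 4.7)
The plan is to show that any non-adaptive query set $Q$ with $\ABS{Q} < \binom{n}{3}$ fails to distinguish some pair of non-isomorphic hierarchies. By Proposition~\ref{prop:tree-unique}, it suffices to exhibit two hierarchies $\Tree[1], \Tree[2]$ with $\Tree[1] \not\equiv \Tree[2]$ whose responses agree on every triplet in $Q$. Since $\ABS{Q} < \binom{n}{3}$, there is some triplet $T = \Set{a,b,c} \notin Q$, and I will construct $\Tree[1], \Tree[2]$ that disagree \emph{only} on $T$.

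The construction is the \emph{terminal triple flip}. Build any fixed binary hierarchy on $\AllElements \setminus \Set{a,b,c}$, then attach $\Set{a,b,c}$ as a cluster of size exactly $3$ by picking some leaf $y$ and replacing it with an internal node whose children are $y$ and a new internal node $v$ whose subtree consists only of $a, b, c$. In a binary tree, the subtree rooted at $v$ has three possible shapes, each determined by which two of $\Set{a,b,c}$ are siblings. Let $\Tree[1]$ use the shape $(ab)c$ (so $\Set{a,b}$ is a sub-cluster) and let $\Tree[2]$ use $(ac)b$, keeping everything else identical.

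The key lemma is that swapping the internal shape of a terminal triple affects the response to exactly one triplet query, namely $T$ itself. The case analysis is short because $\Set{a,b,c}$ is a cluster of both trees:
\begin{itemize}
\item For the query $T = \Set{a,b,c}$: the answer is $\Set{a,b}$ in $\Tree[1]$ and $\Set{a,c}$ in $\Tree[2]$, so they disagree.
\item For a query with two elements from $\Set{a,b,c}$ and one outsider $x$: the cluster $\Set{a,b,c}$ contains both inside elements but not $x$, so in both trees the two insiders are the closer pair.
\item For a query with at most one element of $\Set{a,b,c}$ and the rest outside: since $\Set{a,b,c}$ is a cluster, the smallest cluster containing any $z \in \Set{a,b,c}$ together with any outsider is the same as the smallest cluster containing all of $\Set{a,b,c}$ together with that outsider, and hence is independent of the internal shape of the triple. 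So the ancestors relevant to the query lie outside the modified subtree and are identical in both trees.
\end{itemize}

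It follows that $\Tree[1]$ and $\Tree[2]$ yield identical responses on every triplet except $T$, and in particular on every query in $Q$. Yet they are non-isomorphic, as they differ on $T$. Any deterministic non-adaptive algorithm must therefore output the same tree on both inputs and fail on at least one of them, giving $\ABS{Q} \geq \binom{n}{3} = \Omega(n^3)$. The main obstacle is the case analysis above, which is routine but requires care to verify that the outsider queries truly are unaffected; this is where the fact that $\Set{a,b,c}$ is a (full) cluster of size $3$ in both trees does all of the work. A randomized non-adaptive algorithm can be handled by a standard Yao-style argument that averages the above over the choice of $T$.
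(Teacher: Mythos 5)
Your proof is correct, but it takes a genuinely different route from the paper's. The paper's argument is probabilistic: it takes the ground truth to be a full binary tree with a uniformly random assignment of \elements to leaves, observes that each of the $n/4$ disjoint size-4 clusters has an internal pairing that can only be resolved by a query whose three \elements all land inside that cluster, and then bounds the expected number of resolved clusters by $6k/((n-1)(n-2))$ to conclude $k \geq (1-\Err)\,n(n-1)(n-2)/24$ for any (possibly randomized) algorithm succeeding with probability $1-\Err$. Your argument is an adversarial indistinguishability argument: a single unqueried triplet $T$ lets you hide a local flip inside a size-3 cluster, and your case analysis (which is correct --- the fact that $\Set{a,b,c}$ is a cluster in both trees, together with laminarity, does indeed make every query other than $T$ shape-independent) shows the two resulting hierarchies agree on all of $Q$. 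What each buys: your version gives the exactly tight bound $\binom{n}{3}$ for deterministic non-adaptive algorithms that must always succeed, matching the upper bound with no constant loss, whereas the paper's averaging over $\Theta(n)$ disjoint ``hidden bits'' loses a factor of $4$ but directly yields a quantitative bound for randomized algorithms with constant failure probability. Your extension to randomized algorithms is only sketched, but the sketch is sound: the hard distribution (uniform $T$, fixed remainder, uniform internal shape) is algorithm-independent, and any deterministic algorithm with $k$ queries misses $T$ with probability at least $1 - k/\binom{n}{3}$ and then errs with probability at least $\tfrac12$, so Yao's principle gives $k = \Omega(n^3)$ for any constant failure probability below $\tfrac12$; it would be worth writing out that half-page to fully cover the theorem as stated.
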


Our goal is to reduce the number of queries to $O(n \log n)$
using adaptive ordinal queries.
The algorithm has access to an oracle (for example, a human user) that
will answer ordinal queries based on the (otherwise unknown)
ground truth hierarchy \TreeOpt.
For the next two sections, we assume that these answers are correct;
in Section~\ref{sec:insertion-noisy},
we turn our attention to the  \emph{independent noisy} model.
In this model, there is a known constant $p > \Half$
such that each query response is correct \emph{independently}
with probability $p$,
and adversarially incorrect\footnote{Of course, the adversary could
give a correct answer if an algorithm were to rely on it
always answering incorrectly.} with probability $1 - p$.
Because the query responses are independent,
if an algorithm queries the same triplet multiple times,
it may get different responses.

In several of our algorithms, we will try to locate where within an
existing (partial) hierarchy \Tree another \element \ElS
should be inserted.
In doing so, it is particularly useful to ``query'' an internal
vertex \Vertex, and learn whether \ElS should be in the left subtree
of \Vertex, the right subtree of \Vertex, or neither.
This can be accomplished as follows.
Let $\ElSL \in \Subtree[L]{\Vertex}, \ElSR \in \Subtree[R]{\Vertex}$
be arbitrary \elements in the left and right subtrees of \Vertex.
(\ElSL, \ElSR exist because \Vertex is an internal vertex,
and thus has exactly two children.)
We call the query $\Set{\ElSL, \ElSR, \ElS}$ an 
\emph{ordinal query for \ElS with pivot \Vertex}.
Its possible outcomes can be interpreted as follows.

\begin{proposition} \label{prop:pivot-query}
If the response to \Set{\ElSL, \ElSR, \ElS} is that
\ElSL and \ElS are closest,
then \ElS's correct location is in \Subtree[L]{\Vertex}
(the left subtree of \Vertex).
If the response to $\Set{\ElSL, \ElSR, \ElS}$ is that
\ElSR and \ElS are closest,
then \ElS's correct location is in \Subtree[R]{\Vertex}.
If \ElSR and \ElSL are closest, then \ElS's correct location is not
in the subtree rooted at \Vertex, i.e., \ElS belongs in \Subtree{\Vertex}.
\end{proposition}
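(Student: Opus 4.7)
The plan is to formalize ``$\ElS$'s correct location in $\TTT$'' by passing to the ground-truth tree induced on the enlarged leaf set. Letting $S$ be the leaf set of $\TTT$ (so that, as is implicit in the context of the algorithms, $\TTT = \TreeI[*]{S}$), I would introduce $\TreeP := \TreeI[*]{S \cup \Set{\ElS}}$, whose hierarchy equals $\IndClust{\TreeOpt}$ restricted to $S \cup \Set{\ElS}$ by Proposition~\ref{prop:induced}. Deleting $\ElS$ from $\TreeP$ (with contraction of any resulting degree-one ancestor) recovers $\TTT$; this reduction preserves binariness and induces a canonical injection from the internal vertices of $\TTT$ into those of $\TreeP$. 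In particular $\Vertex$ corresponds to the vertex $\VertexP$ in $\TreeP$ that is the least common ancestor of $\ElSL$ and $\ElSR$, and the children of $\VertexP$ in $\TreeP$ contain $\Subtree[L]{\Vertex}$ and $\Subtree[R]{\Vertex}$ respectively. Under this correspondence, ``$\ElS$ lies in $\Subtree[L]{\Vertex}$'' means $\ElS$ shares its child-subtree of $\VertexP$ with $\ElSL$, similarly for $\Subtree[R]{\Vertex}$, and ``$\ElS$ lies in $\Subtree{\Vertex}$'' means $\ElS$ sits outside the subtree of $\TreeP$ rooted at $\VertexP$.

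Next, I would fix $C^*$ to be the smallest cluster of $\IndClust{\TreeOpt}$ containing both $\ElSL$ and $\ElSR$, and observe that $C^* \cap (S \cup \Set{\ElS})$ is precisely the cluster at $\VertexP$; this follows from the minimality of $C^*$ together with the fact that both the $\ElSL$-side and the $\ElSR$-side children of $C^*$'s node in $\TreeOpt$ survive the induction. For the first case, the response $\Set{\ElSL, \ElS}$ supplies a cluster $D \in \IndClust{\TreeOpt}$ with $\ElSL, \ElS \in D$ and $\ElSR \notin D$. I would then apply the laminar property to the pair $D, C^*$: they share $\ElSL$, so one is contained in the other, and $C^* \subseteq D$ is impossible since $\ElSR \in C^* \setminus D$. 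Hence $D \subsetneq C^*$, and by Proposition~\ref{prop:induced} the set $D \cap (S \cup \Set{\ElS})$ is a cluster of $\TreeP$ strictly inside the cluster at $\VertexP$, containing $\ElSL$ and $\ElS$ but not $\ElSR$. Any cluster strictly inside the cluster at $\VertexP$ lies entirely within one of $\VertexP$'s two child-subtrees, and because this one contains $\ElSL$ it must be the left subtree; hence $\ElS$ lies there, as required.

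The case where the response is $\Set{\ElSR, \ElS}$ follows by swapping the roles of $L$ and $R$. For the third case, the response $\Set{\ElSL, \ElSR}$ provides a cluster $E \in \IndClust{\TreeOpt}$ with $\ElSL, \ElSR \in E$ and $\ElS \notin E$; the minimality of $C^*$ forces $C^* \subseteq E$, so $\ElS \notin C^*$, and intersecting with $S \cup \Set{\ElS}$ shows that $\ElS$ is not in the cluster at $\VertexP$, i.e., $\ElS$ lies outside the subtree of $\TreeP$ rooted at $\VertexP$, which is exactly the meaning of $\Subtree{\Vertex}$. The only real obstacle is the bookkeeping in the first paragraph: setting up the correspondence $\Vertex \mapsto \VertexP$ carefully and pinning down exactly what ``$\ElS$'s correct location'' means. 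Once this correspondence is in place, each of the three cases is a one-line consequence of the laminar property combined with the minimality of the cluster $C^*$.
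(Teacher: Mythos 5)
Your argument is correct. The paper states Proposition~\ref{prop:pivot-query} without any proof, treating it as an immediate consequence of the definition of a query response (the existence of a cluster containing the closest pair but not the third \element) together with the binariness of \TreeOpt and Proposition~\ref{prop:induced}; your laminar-family argument, organized around the minimal cluster $C^*$ containing both \ElSL and \ElSR and the correspondence between \Vertex and its counterpart in the tree induced on $S \cup \Set{\ElS}$, is a sound and complete formalization of exactly that reasoning.
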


Ordinal queries with pivot \Vertex are only defined when \Vertex is an
internal node, and we will only use them in that case.
When notationally convenient, we will consider the left/right child
of \Vertex itself as the response to an ordinal query with pivot \Vertex.

\subsection{Vertex Queries} \label{sec:vertex-queries}

Ordinal queries with pivot \Vertex provide ``directional'' information
regarding the location of \ElS with respect to \Vertex.
As such, they are similar to (though subtly different from)
\emph{vertex queries} defined in past work
\cite{onak-parys:2006:tree-vertex-linear,2016:binary-search}.
In the vertex query model, the goal is to identify an a priori unknown
\emph{target node} \Target in a known
(undirected, not necessarily rooted) tree \Tree,
by repeatedly querying nodes \Vertex of \Tree.
In response to a vertex query, the algorithm is told that
$\Target = \Vertex$, or otherwise is given the unique neighbor that is
closer to the target.

Vertex queries are subtly more powerful than ordinal queries with a pivot,
because they can identify \Vertex as the target.
We will show later how to simulate (for our purposes)
vertex queries with a constant number of ordinal queries with pivots.
In fact, our analysis of the algorithm inspired by \InsertionSort is
based on ideas from prior work in the vertex query model,
and the analysis for the independent noisy model explicitly uses a
reduction from the ordinal query model to the vertex query model.

%% A \QuickSort-like Randomized Algorithm

\section{A \QuickSort-like Randomized Algorithm}
\label{sec:quick-sort}

In this section, we propose a simple randomized algorithm
(Algorithm~\ref{alg:quick-sort}),
reminiscent of \QuickSort.
The algorithm uses $O(n \log n)$ ordinal queries in expectation
to learn the hierarchy over $n$ \elements.
Algorithm~\ref{alg:quick-sort} 
randomly partitions the \elements into three \partitions
\PartitionA, \PartitionB, and \PartitionC
(lines~\ref{line:partitioning-begin}--\ref{line:partitioning-end}),
until all of the partitions are small enough.
Once they are, the algorithm recursively determines the hierarchy for
each partition, and finally merges them.

\InsertAlgorithmInOneColumn{\QuickClustering$(\ElSet)$}{alg:quick-sort}{%
\IF{$\SetCard{\ElSet} \leq 2$}
	\RETURN{the obvious hierarchy.}
\ENDIF
\REPEAT
	\STATE{Let $(\PivotA, \PivotB)$
	be a pair of distinct \elements chosen uniformly at random.}
	\label{line:partitioning-begin}
	\FOR{each $\ElS \in \ElSet \setminus \Set{\PivotA, \PivotB}$}
		\STATE{Make an ordinal query of \Set{\PivotA, \PivotB, \ElS}.}
	\ENDFOR
	\STATE{Let $\PartitionA \AlgAssign \Set{\PivotA} \cup
	\SpecSet{\ElS \in \ElSet}{\text{\ElS and \PivotA are closer to each other
	than to 	\PivotB}}$.}
	\STATE{Let $\PartitionB \AlgAssign \Set{\PivotB} \cup
	\SpecSet{\ElS \in \ElSet}{\text{\ElS and \PivotB are closer to each other
	than to \PivotA}}$.}
	\STATE{Let $\PartitionC \AlgAssign 
  	\SpecSet{\ElS \in \ElSet}{\text{\PivotA and \PivotB are closer to each other
  	than to \ElS}}$.}
\UNTIL{$\max(\SetCard{\PartitionA}, \SetCard{\PartitionB}, \SetCard{\PartitionC})
\leq \frac{15}{16} \cdot \SetCard{\ElSet}$.}
\label{line:partitioning-end}
% partitioning ends
\STATE{$\TreeA \AlgAssign \QuickClustering(\PartitionA)$.}
\STATE{$\TreeB \AlgAssign \QuickClustering(\PartitionB)$.}
\STATE{$\TreeC \AlgAssign \QuickClustering(\PartitionC)$.}
\RETURN{\Merge $(\TreeA, \TreeB, \TreeC)$.}
}

The partitioning is akin to the partitioning stage in \QuickSort: 
it draws a pair of distinct \emph{pivot \elements}
$\PivotA, \PivotB \in \ElSet$, uniformly at random.
It then partitions the \elements of \ElSet into three sets.
These sets can be characterized as follows:
Let \VertexBar be the lowest common ancestor of \PivotA, \PivotB
in the ground truth hierarchy \TreeOpt,
with children \Vertex[A], \Vertex[B].
Then, \PartitionA (\PartitionB) consists of all \elements in the
subtree of \TreeOpt rooted at \Vertex[A] (\Vertex[B]),
while \PartitionC consists of all remaining \elements:
exactly the \elements outside of the subtree of \TreeOpt rooted at \VertexBar.
Therefore, after the recursive calls,
\TreeA and \TreeB are the subtrees of \TreeOpt rooted at \Vertex[A]
and \Vertex[B], while \TreeC is the result of removing \VertexBar
and its subtree from \TreeOpt (and contracting its parent).

The role of the \Merge function (Algorithm~\ref{alg:merge}) is then to
insert \VertexBar and its subtree in the correct place in \TreeC;
the primary task here is to correctly identify the sibling \Vertex of \VertexBar.
The algorithm does this by starting at the root and repeatedly using
ordinal queries for a representative \element \ElS of
$\TreeA \cup \TreeB$ with \Vertex as a pivot,
and following the direction returned by the ordinal query.
That inserting \VertexBar as a sibling of \Vertex
(with a new parent \VertexP) is correct follows from
Proposition~\ref{prop:induced}.

\InsertAlgorithmInOneColumn{\Merge $(\TreeA, \TreeB, \TreeC)$}{alg:merge}{%
\STATE{Add a vertex \VertexBar whose children are the roots of \TreeA and \TreeB.}
\STATE{Let \ElS be an arbitrary \element (i.e., leaf) in \TreeA or \TreeB.}
\STATE{Let \Vertex be the root of \TreeC.}
\WHILE{\Vertex is not a leaf, and the ordinal query for \ElS
with pivot \Vertex does not return \Subtree{\Vertex}}
	\STATE{Update \Vertex to the child of \Vertex
returned by that ordinal query for \ElS}
\ENDWHILE
\STATE{Insert a new vertex \VertexP as parent of \Vertex, and make
\VertexBar its other child.}
}

We next analyze the expected number of ordinal queries required
by the \QuickClustering algorithm.
The number of queries required for \Merge is at most\footnote{%
In fact, using the techniques from Section~\ref{sec:insertion-clean},
the bound could be improved to $1 + \log_2 \SetCard{\ElSet}$;
but this improvement is drowned out by the number of queries
for the partitioning.}
$\SetCard{\ElSet}$,
and the number of queries for each iteration of the partitioning loop
(lines \ref{line:partitioning-begin}--\ref{line:partitioning-end}
in Algorithm~\ref{alg:quick-sort})
is $\SetCard{\ElSet} - 2$.
The key observation is that in expectation,
the loop is executed only a constant number of times
before succeeding, captured by Lemma~\ref{lem:separator}.

\begin{lemma} \label{lem:separator}
The probability that in line~\ref{line:partitioning-end}
of Algorithm~\ref{alg:quick-sort},
at least one $\PartitionA, \PartitionB, \PartitionC$
is larger than $\frac{15}{16} \cdot \SetCard{\ElSet}$
is at most $\frac{125}{128}$.
\end{lemma}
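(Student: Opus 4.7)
The plan is to analyze the random partition through the lowest common ancestor (LCA) of the pivot pair $\{\PivotA,\PivotB\}$ in the induced ground-truth tree $\TreeOpt[\ElSet]$. By the characterization stated just before the lemma, if $\VertexBar$ is this LCA and its left and right subtrees contain $\ell$ and $r$ \elements of $\ElSet$ respectively, then $|\PartitionA|=\ell$, $|\PartitionB|=r$, and $|\PartitionC|=n-\ell-r$, where $n=|\ElSet|$. Hence the bad event decomposes as the disjunction ``$\max(\ell,r)>\tfrac{15}{16}n$'' (one of $\PartitionA$, $\PartitionB$ is too big) or ``$\ell+r<\tfrac{n}{16}$'' ($\PartitionC$ is too big). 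Since the pivot pair is uniform among the $\binom{n}{2}$ unordered pairs of distinct \elements of $\ElSet$, each internal node $u$ of $\TreeOpt[\ElSet]$ is the LCA with probability $|L_u|\,|R_u|/\binom{n}{2}$, where I write $|L_u|$ and $|R_u|$ for the leaf counts of $u$'s two subtrees. I would bound the two sub-events separately and then union-bound.

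For the first sub-event I would trace a ``heavy path'' $u_0,u_1,\ldots,u_k$ from the root, where $u_{i+1}$ is the child of $u_i$ whose subtree still contains more than $\tfrac{15}{16}n$ leaves, stopping at $u_k$ when no such child exists. A short case analysis shows that $\max(|L_u|,|R_u|)>\tfrac{15}{16}n$ iff $u\in\{u_0,\ldots,u_{k-1}\}$. Letting $S_i$ denote the ``light'' sibling subtree branching off the heavy path at $u_i$, the $S_i$ are pairwise disjoint, each has fewer than $n/16$ leaves, and together with the subtree at $u_k$ they partition the $n$ leaves; since the subtree at $u_k$ itself has more than $\tfrac{15}{16}n$ leaves, the remainder telescopes to $\sum_i|S_i|<n/16$. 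Combined with $|L_{u_i}|\,|R_{u_i}|\le n\,|S_i|$ and summing LCA probabilities along the path, this sub-event is bounded by $\frac{n\sum_i|S_i|}{\binom{n}{2}}<\frac{n}{8(n-1)}$.

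For the second sub-event I would partition the leaves by their maximal ancestor subtree of size less than $n/16$: every leaf belongs to such a subtree (singletons qualify whenever $n>16$; for $n\le 16$ the event is vacuous since $\ell+r\ge 2$), and laminarity forces the maximal ones to be disjoint. Writing their sizes as $m_1,\ldots,m_s$ with $\sum_j m_j=n$ and each $m_j<n/16$, the event $\ell+r<n/16$ is exactly the event that both pivots land in the same maximal small subtree, which a one-line pigeonhole bounds by $\sum_j\binom{m_j}{2}/\binom{n}{2}<\frac{n}{16(n-1)}$. Summing the two bounds yields $\Pr[\text{bad}]<\frac{3n}{16(n-1)}$, which is at most $\tfrac{3}{8}$ for all $n\ge 2$ and thus comfortably below $\tfrac{125}{128}$. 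I expect the main delicate piece to be the heavy-path step: one has to verify that the ``overly imbalanced'' nodes really do form an initial segment of a single chain from the root, and that the sizes of the light siblings telescope as claimed. Given that, the second sub-event and the final union bound reduce to pigeonhole calculations on the LCA distribution.
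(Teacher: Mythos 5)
Your proof is correct, but it takes a genuinely different route from the paper's. The paper argues only for the \emph{existence} of one good event: it exhibits nested clusters $\ClusterP \subseteq \Cluster$ with $\frac{n}{16} < \SetCard{\ClusterP} \leq \frac{n}{8}$ and $\frac{n}{4} < \SetCard{\Cluster} \leq \frac{n}{2}$ (found by repeatedly descending into the larger child), notes that with probability $\Omega(1)$ the two pivots straddle $\ClusterP$ and $\Cluster \setminus \ClusterP$, and observes that in that case $\ClusterP \subseteq \PartitionA \subseteq \Cluster$ forces all three parts below $\frac{15}{16}n$; the complement of this single event gives the (deliberately crude) bound $\frac{125}{128}$. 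You instead compute the full distribution of the pivots' LCA, decompose the failure event exactly into ``the LCA has a child subtree exceeding $\frac{15}{16}n$'' and ``the LCA's subtree has fewer than $\frac{n}{16}$ leaves,'' and bound each via a heavy-path/telescoping argument and a maximal-small-subtree pigeonhole, respectively. Both arguments are sound; yours is longer and requires verifying the chain structure of the heavy nodes and the disjointness of the maximal small subtrees, but it yields a failure probability of at most $\frac{3n}{16(n-1)} \leq \frac{3}{8}$, far stronger than $\frac{125}{128}$. Since the lemma is only used to argue that the partitioning loop repeats $O(1)$ times in expectation, either constant suffices for Theorem~\ref{thm:quickclustering}, though your bound would improve the hidden constant in the expected query count.
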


\begin{proof}
We first prove that there exist clusters
\Cluster and \ClusterP with
$\ClusterP \subseteq \Cluster$ in \IndClust{\Tree}
such that $\frac{n}{16} < \SetCard{\ClusterP} \leq \frac{n}{8}$
and $\frac{n}{4} < \SetCard{\Cluster} \leq \frac{n}{2}$.
This follows because \Tree is binary:
starting from $\Cluster = \AllElements$,
we can repeatedly consider a bipartition of
\Cluster into two disjoint subclusters.
While \Cluster has more than $\frac{n}{2}$ \elements,
we can replace it by the larger of the two subclusters,
which must have at least half the size of \Cluster.
We end up with a set \Cluster of the required size.
Continuing from $\ClusterP = \Cluster$ in the same way until
\ClusterP has size at most $\frac{n}{8}$ gives us the claimed sets.

The probability that one of $\PivotA, \PivotB$ is in \ClusterP and the
other in $\Cluster \setminus \ClusterP$ is at least
\[
\frac{\SetCard{\ClusterP} \cdot \SetCard{\Cluster \setminus \ClusterP}}{\binom{n}{2}}
\; > \; 2 \cdot \frac{1}{16} \cdot \frac{3}{16}
\; = \; \frac{3}{128}.
\]

When this event happens, we assume w.l.o.g.~that
$\PivotA \in \ClusterP$ and
$\PivotB \in \Cluster \setminus \ClusterP$.
Then, $\ClusterP \subseteq \PartitionA \subseteq \Cluster$,
and therefore
$\frac{n}{16} \leq \SetCard{\PartitionA} \leq \frac{n}{2} \leq \frac{15n}{16}$.
Because $\SetCard{\PartitionA} \geq \frac{n}{16}$,
we also get that
$\SetCard{\PartitionB}, \SetCard{\PartitionC} \leq \frac{15n}{16}$, 
completing the proof.
\end{proof}

By Lemma~\ref{lem:separator} and the preceding discussion,
the total number of ordinal queries
for both partitioning and merging
is $O(\SetCard{\ElSet})$.
Letting \QC{n} denote the expected number of comparisons for
\QuickClustering on $n$ \elements,
we obtain the recurrence
%\begin{align*}
$\QC{n}
\; \leq \; \QC{\SetCard{\PartitionA}}
         + \QC{\SetCard{\PartitionB}}
         + \QC{\SetCard{\PartitionC}}
         + O(n)$.
%\end{align*}
Because
$\SetCard{\PartitionA} + \SetCard{\PartitionB} + \SetCard{\PartitionC} = n$
and
$\max (\SetCard{\PartitionA}, \SetCard{\PartitionB}, \SetCard{\PartitionC})
\leq \frac{15}{16} \cdot n$,
the solution of this recurrence is $\QC{n} = O(n \log n)$.
Hence, we have proved the following theorem.

\begin{theorem} \label{thm:quickclustering}
Algorithm~\ref{alg:quick-sort} learns the hierarchy
using $O(n \log n)$ ordinal queries in expectation.
\end{theorem}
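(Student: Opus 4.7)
The plan is to combine Lemma~\ref{lem:separator} with a standard divide-and-conquer recurrence analysis in three steps: bound the expected query cost of a single invocation (not counting recursive calls), assemble this into a recurrence, then solve the recurrence.

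First I would bound the expected queries used by one invocation of \QuickClustering on input \ElSet (excluding the recursive calls) by $O(\SetCard{\ElSet})$. Each pass through the repeat-until loop on lines~\ref{line:partitioning-begin}--\ref{line:partitioning-end} uses exactly $\SetCard{\ElSet}-2$ ordinal queries. By Lemma~\ref{lem:separator}, each pass succeeds (yields an admissible partition) independently with probability at least $3/128$; hence the number of iterations is stochastically dominated by a geometric random variable with constant mean, contributing $O(\SetCard{\ElSet})$ queries in expectation. The subsequent call to \Merge traces a single root-to-leaf path in \TreeC, using one ordinal query per internal node visited, which adds at most another $\SetCard{\ElSet}$ queries because \TreeC has at most $\SetCard{\ElSet}$ leaves.

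Next I would set up the recurrence $\QC{n} \leq \mathbb{E}\bigl[\QC{\SetCard{\PartitionA}} + \QC{\SetCard{\PartitionB}} + \QC{\SetCard{\PartitionC}}\bigr] + O(n)$, together with the deterministic constraints $\SetCard{\PartitionA}+\SetCard{\PartitionB}+\SetCard{\PartitionC}=n$ and $\max(\SetCard{\PartitionA},\SetCard{\PartitionB},\SetCard{\PartitionC}) \leq \frac{15}{16} n$ guaranteed by the successful loop exit. I would then prove $\QC{n} \leq C \, n \log_2 n$ by strong induction on $n$, for a sufficiently large constant $C$.

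The inductive step reduces to the inequality $\sum_{S \in \{\PartitionA,\PartitionB,\PartitionC\}} \SetCard{S} \log(n/\SetCard{S}) = \Omega(n)$, uniformly over all admissible partitions; this is the only mildly delicate part of the argument. The key observation is that the largest of the three parts is at least $n/3$ (since they sum to $n$) yet at most $\frac{15}{16}n$, so that single term alone contributes at least $(n/3)\log(16/15)$, a positive constant multiple of $n$, while the other two terms are nonnegative. Taking expectations preserves this deterministic lower bound, so choosing $C$ large enough to absorb both this slack and the additive $O(n)$ term closes the induction and yields $\QC{n} = O(n \log n)$, as desired.
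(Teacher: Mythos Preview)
Your proposal is correct and follows essentially the same approach as the paper: bound the per-invocation cost by $O(\SetCard{\ElSet})$ via Lemma~\ref{lem:separator} and the linear cost of \Merge, set up the recurrence with the constraints $\SetCard{\PartitionA}+\SetCard{\PartitionB}+\SetCard{\PartitionC}=n$ and $\max \leq \tfrac{15}{16}n$, and conclude $\QC{n}=O(n\log n)$. The paper simply asserts that the recurrence solves to $O(n\log n)$ without spelling out the induction, whereas you supply the explicit slack argument (the largest part lies in $[n/3,\tfrac{15}{16}n]$ so $\sum \SetCard{S}\log(n/\SetCard{S}) = \Omega(n)$), which is a welcome addition but not a different route.
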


The query complexity of 
Algorithm~\ref{alg:quick-sort} is asymptotically optimal.
Because the hierarchical clusterings include all permutations
of \AllElements,
there are at least $2^{\Omega(n \log n)}$ different hierarchical
clusterings of $n$ \elements (see also
\cite{vikram-dasgupta:2016:interactive-hierarchical-clustering}).
More precisely, as pointed out in \cite{felsenstein:2004:phylogenies},
there are $(2n-3) \cdot (2n-5)!! = \frac{(2n-3)!}{(n-2)! \cdot 2^{n-2}}$
hierarchical clusterings.
By Stirling's approximation,
at least $n \log_2 n - O(n)$ bits of information
are necessary to uniquely identify \TreeOpt.
Since each query response reveals at most $\log_2 (3)$ bits of
information, the number of queries required is at least
$n \log_3 n - O(n)$.

% \InsertAlgorithm{\Partition$(\ElS[1], \ldots, \ElS[n])$}%
% {alg:partition}{
% \STATE{$(\PivotA, \PivotB) \AlgAssign
% \text{two distinc \elements chosen uniformly at random}.$}
% \label{line:partitioning-begin}
% \STATE{$\PartitionA \AlgAssign \Set{\PivotA},
% \PartitionB \AlgAssign \Set{\PivotB},
% \PartitionC \AlgAssign \emptyset$.}
% \FOR{each $\ElS \in \ElSet \setminus \Set{\PivotA, \PivotB}$}
% %\STATE{$\Response \AlgAssign
% %\text{response to query of }\Set{\PivotA, \PivotB, \ElS}$.}
% \STATE{Query \Set{\PivotA, \PivotB, \ElS}.}
% \IF{\ElS and \PivotA are closer to each other than to \PivotB}
% \STATE{$\PartitionA \AlgAssign \PartitionA \cup \Set{\ElS}$.}
% \ENDIF
% \IF{\ElS and \PivotB are closer to each other than to \PivotA}
% \STATE{$\PartitionB \AlgAssign \PartitionB \cup \Set{\ElS}$.}
% \ENDIF
% \IF{\PivotA and \PivotB are closer to each other than to \ElS}
% \STATE{$\PartitionC \AlgAssign \PartitionC \cup \Set{\ElS}$.}
% \ENDIF
% \ENDFOR
% \RETURN{$(\PartitionA, \PartitionB, \PartitionC)$.}
% }

%% An \InsertionSort-like Algorithm without Noise

\section{An \InsertionSort-like Algorithm without Noise}
\label{sec:insertion-clean}

In this section, we present and analyze 
the algorithm of Pearl and Tarsi~\cite{pearl-tarsi:1986:strucuting}.
This algorithm is reminiscent of \InsertionSort.
It is also essentially the same as an algorithm proposed by
Kannan et al.~\cite{kannan-lawler-warnow:1996:phylogeny-triplet},
as well as Tamuz et al.~%
\cite{tamuz-liu-belongie-shamir-kalai:2011:learning-kernel}
(who only considered it when the ground-truth tree is balanced and thus
has logarithmic height).
We give a self-contained analysis
of this algorithm as it is the foundation
of our main result in Section~\ref{sec:insertion-noisy},
and the analysis allows us to introduce key concepts and abstractions.

Algorithm~\ref{alg:insertion-clustering}
considers the \elements in an arbitrary order
$\AllElements = \Set{\ElS[1], \ElS[2], \ldots, \ElS[n]}$.
In each iteration $i$, \element \ElS[i] is inserted into the hierarchy
\Tree[i-1] for the preceding $i-1$ \elements.
Proposition~\ref{prop:sibling}
(whose proof follows directly from Proposition~\ref{prop:induced}) 
shows that \ElS[i] will always be inserted as a leaf sibling of
an existing (leaf or internal) node \Vertex[i], 
by inserting a new common parent of \ElS[i] and \Vertex[i].

\begin{proposition} \label{prop:sibling}
Let $\Cluster \in \SetFam[\ElS[1], \ldots, \ElS[i-1]]$ be minimal
with the property that 
$\Cluster \cup \Set{\ElS[i]} \in \SetFam[\ElS[1], \ldots, \ElS[i]]$.
(Informally speaking,
$\Cluster \subseteq \Set{\ElS[1], \ldots, \ElS[i - 1]}$ is the cluster
corresponding to the sibling of \ElS[i] in \TreeI[*]{\ElS[1], \ldots, \ElS[i]}.)
In \TreeI[*]{\ElS[1], \ldots, \ElS[i-1]},
let \Vertex[i] be the root of the subtree corresponding to the cluster
\Cluster.
Then, \TreeI[*]{\ElS[1], \ldots, \ElS[i]} is obtained from
\TreeI[*]{\ElS[1], \ldots, \ElS[i-1]} by inserting a new
internal node \VertexP as the parent of \Vertex[i],
and making the leaf \ElS[i] its other child.
\end{proposition}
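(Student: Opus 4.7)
The plan is to identify the sibling of the leaf $\ElS[i]$ directly inside the target tree $\TreeI[*]{\ElS[1], \ldots, \ElS[i]}$, argue that its leaf-set is precisely the $\Cluster$ of the statement, and then observe that removing $\ElS[i]$ from $\TreeI[*]{\ElS[1], \ldots, \ElS[i]}$ recovers $\TreeI[*]{\ElS[1], \ldots, \ElS[i-1]}$. Throughout I would use Proposition~\ref{prop:induced} to pass freely between clusters of $\SetFam[\ldots]$ and rooted subtrees of the corresponding induced trees.

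First, in the binary tree $\TreeI[*]{\ElS[1], \ldots, \ElS[i]}$ the leaf $\ElS[i]$ has a unique parent $\VertexP$, whose other child $\Vertex$ roots a subtree whose leaves all lie in $\{\ElS[1], \ldots, \ElS[i-1]\}$; call that leaf-set $\ClusterP$. By Proposition~\ref{prop:induced}, $\ClusterP \cup \{\ElS[i]\} \in \SetFam[\ElS[1], \ldots, \ElS[i]]$ and $\ClusterP \in \SetFam[\ElS[1], \ldots, \ElS[i-1]]$. To show $\ClusterP$ is minimal with this property, suppose some $\Cluster[0] \subsetneq \ClusterP$ also had $\Cluster[0] \cup \{\ElS[i]\}$ in $\SetFam[\ElS[1], \ldots, \ElS[i]]$. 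This set would be the leaf-set of a subtree of $\TreeI[*]{\ElS[1], \ldots, \ElS[i]}$ containing $\ElS[i]$, hence rooted at an ancestor of $\ElS[i]$; but the ancestors of $\ElS[i]$ are totally ordered and $\VertexP$ is the lowest proper one, so any such subtree must contain everything under $\VertexP$, in particular $\ClusterP \cup \{\ElS[i]\}$, contradicting $\Cluster[0] \subsetneq \ClusterP$.

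For uniqueness of a minimal $\Cluster$, a standard laminar argument suffices: two minimal candidates $\Cluster[1], \Cluster[2]$ must be disjoint (else laminarity forces containment, violating minimality), but then $\Cluster[1] \cup \{\ElS[i]\}$ and $\Cluster[2] \cup \{\ElS[i]\}$ share $\ElS[i]$, so by laminarity one sits inside the other, which after removing $\ElS[i]$ forces $\Cluster[1] \subseteq \Cluster[2]$ or vice versa, contradicting disjointness (clusters are non-empty). Hence the $\Cluster$ of the statement is unique and equals $\ClusterP$.

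Finally, I would close the loop via idempotence of the induced-tree operation: restricting $\TreeOpt$ first to $\{\ElS[1], \ldots, \ElS[i]\}$ and then to $\{\ElS[1], \ldots, \ElS[i-1]\}$ yields the same tree as restricting $\TreeOpt$ directly to $\{\ElS[1], \ldots, \ElS[i-1]\}$. Performing the second restriction on $\TreeI[*]{\ElS[1], \ldots, \ElS[i]}$ deletes the leaf $\ElS[i]$, leaving $\VertexP$ with the single child $\Vertex$; the contraction step then suppresses $\VertexP$ and promotes $\Vertex$ to the node $\Vertex[i]$ of the statement (the root, in $\TreeI[*]{\ElS[1], \ldots, \ElS[i-1]}$, of the subtree with leaf-set $\Cluster$). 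Reversing this move is exactly the insertion described in the proposition. I expect no deep obstacle; the main bookkeeping care is needed for the corner case where $\VertexP$ is the root of $\TreeI[*]{\ElS[1], \ldots, \ElS[i]}$, in which case $\Vertex[i]$ is the root of $\TreeI[*]{\ElS[1], \ldots, \ElS[i-1]}$ and the inserted $\VertexP$ becomes the new root, still in agreement with the statement.
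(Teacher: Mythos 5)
Your proof is correct and takes essentially the route the paper intends: the paper gives no explicit argument, stating only that the proposition ``follows directly from Proposition~\ref{prop:induced},'' and your write-up is a faithful elaboration of exactly that reduction (identify the sibling of $\ElS[i]$ in the larger induced tree, match its leaf-set with the minimal cluster $\Cluster$, and use idempotence of restriction to see that deleting $\ElS[i]$ and contracting recovers the smaller tree). Note only that your second paragraph already shows every candidate cluster contains $\ClusterP$, so the separate laminarity argument for uniqueness is redundant, though harmless.
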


\InsertAlgorithmInOneColumn{\InsertionClustering $(\ElS[1], \ldots, \ElS[n])$}%
{alg:insertion-clustering}{
\STATE{Let $\Tree[2]$ be the unique (trivial) hierarchical clustering
of \elements $\ElS[1], \ElS[2]$.}
\FOR{$i = 3, \ldots, n$}
	\STATE{Let $\Vertex[i] = \text{\FindSibling}(\Tree[i-1], \ElS[i])$.
	\label{line:vertex-search}}
	\STATE{Let \VertexP[i] be the parent of \Vertex[i]
	(or $\emptyset$ if \Vertex[i] is the root)}
	\STATE{Let \Tree[i] be the tree obtained from \Tree[i - 1]
	by adding a new vertex \VertexP with children \Vertex[i] and
	\ElS[i] and parent \VertexP[i].}
\ENDFOR
\RETURN{\Tree[n]}
}

When \TreeOpt has logarithmic height,
\FindSibling can be simply implemented as a top-down
search similar to the \Merge algorithm in Section~\ref{sec:quick-sort}.
This approach can take linear time if the tree is not balanced,
and our main focus in this section is on improving the time
to $\log n$ for arbitrary trees.

Assuming that the function \FindSibling (to be specified below)
correctly identifies \Vertex[i],
the algorithm maintains the following invariants:
\begin{enumerate}
\item Each internal node of \Tree[i] has two children;
this holds because every time an internal node is added, it is explicitly given
two children.
\item \Tree[i] is always the correct hierarchy for
$\{\ElS[1], \ElS[2], \ldots, \ElS[i]\}$,
in the sense that $\Tree[i] \equiv \TreeI[*]{\ElS[1], \ldots, \ElS[i]}$; 
this follows inductively from Proposition~\ref{prop:sibling}.
\end{enumerate}

Thus, it remains to specify an efficient implementation
of the function \FindSibling and analyze
the overall number of ordinal queries.

\FindSibling is very similar to ``\binarysearch in Trees'' 
\cite{mozes-onak-weimann:2008:tree-edge-linear,%
onak-parys:2006:tree-vertex-linear,%
2016:binary-search}
in the \emph{vertex query} model.
Recall that in the vertex query model,
an unknown node \Target is to be located in a given and known tree \Tree.
When an algorithm queries node \Vertex which is not the target,
it learns the first edge of the unique \Vertex-\Target path
(equivalently, the connected component of $\Tree \setminus \Set{\Vertex}$
in which the target lies).
The ``\binarysearch'' algorithm is based on the following observation
of Jordan~\cite{jordan:1869:assemblages}:
each tree \Tree has a separator node \Vertex with the property
that each connected component of
$\Tree \setminus \Set{\Vertex}$ contains
at most half the nodes of \Tree.
Repeatedly querying such a separator node of the remaining subtree is
easily shown to find the target \Target in a tree of $n$ nodes using at most 
$\log_2 n$ vertex queries
\cite{onak-parys:2006:tree-vertex-linear}.

\FindSibling treats the (unknown) sibling \Vertex[i]
as the target to discover,
and uses ordinal queries with internal vertices as pivots
(see Section~\ref{sec:ordinal-def}) to essentially simulate vertex queries.
It maintains a set $S$ of candidate
vertices \Vertex that have not been ruled out from being \Vertex[i].

\InsertAlgorithmInOneColumn{\FindSibling $(\Tree, \ElS)$}{alg:findsibling}{
\STATE{Let $S = V(\Tree)$ be the set of all vertices of \Tree.}
\WHILE{$\SetCard{S} > 1$}
	\STATE{Let \Vertex be a \emph{pivot vertex} minimizing
	$\max(\SetCard{\Subtree[L]{\Vertex} \cap S},
	\SetCard{\Subtree[R]{\Vertex} \cap S},
	\SetCard{\Subtree{\Vertex} \cap S})$.}
	\STATE{Make an ordinal query for \ElS with pivot \Vertex.}
	\IF{the query returns the left subtree of \Vertex}
		\STATE{Let $S = S \cap \Subtree[L]{\Vertex}$.}
	\ELSIF{the query returns the right subtree of \Vertex}
		\STATE{Let $S = S \cap \Subtree[R]{\Vertex}$.}
	\ELSE 
		\STATE{Let $S = S \cap \Subtree{\Vertex}$.}
	\ENDIF
\ENDWHILE
\RETURN{the unique vertex \Vertex in $S$.}
}

\begin{lemma} \label{lem:clean-insertion}
When all answers to ordinal queries are correct,
$\text{\FindSibling}(\Tree, \ElS)$ finds \Vertex[i] in \Tree
using at most $\log \SetCard{\Tree}$ ordinal queries.
\end{lemma}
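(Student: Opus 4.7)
The plan is to prove two facts about FindSibling separately: (i) whenever all ordinal queries are answered correctly, the intended sibling $\Vertex[i]$ from Proposition~\ref{prop:sibling} remains in the candidate set $S$ throughout the execution, so the unique vertex returned is $\Vertex[i]$; and (ii) the size of $S$ is cut to at most $\Ceil{\SetCard{S}/2}$ by each iteration, giving the $\log_2 \SetCard{\Tree}$ bound.

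For (i), I would argue by induction on the iterations. Initially $S = V(\Tree)$ contains $\Vertex[i]$. At an iteration with pivot $\Vertex$, Proposition~\ref{prop:pivot-query} identifies which of the three pieces $\Subtree[L]{\Vertex}$, $\Subtree[R]{\Vertex}$, $\Subtree{\Vertex}$ contains $\ElS[i]$'s correct location in $\Tree[i]$. Since $\ElS[i]$ and $\Vertex[i]$ become siblings in $\Tree[i]$, they must lie in the same one of these three pieces: if $\Vertex = \Vertex[i]$, then $\ElS[i]$ is outside both subtrees of $\Vertex$ in $\Tree[i]$, so the response is ``middle'' and $\Vertex[i] = \Vertex \in \Subtree{\Vertex}$; if $\Vertex \neq \Vertex[i]$, then $\Vertex[i]$ (already a vertex of $\Tree[i-1]$) sits in the same piece as its new sibling $\ElS[i]$. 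In either case $\Vertex[i]$ survives the intersection, so the loop exits with $S = \Set{\Vertex[i]}$.

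For (ii), the key claim is a separator lemma: for every $S \subseteq V(\Tree)$ with $\SetCard{S} \geq 2$, there is an internal vertex $\Vertex$ such that each of $\SetCard{S \cap \Subtree[L]{\Vertex}}$, $\SetCard{S \cap \Subtree[R]{\Vertex}}$, $\SetCard{S \cap \Subtree{\Vertex}}$ is at most $\Ceil{\SetCard{S}/2}$. I would prove this by a greedy walk down from the root: while one of the current vertex's children has a subtree containing strictly more than $\SetCard{S}/2$ elements of $S$, descend into that child; otherwise stop. The walk terminates because subtree sizes strictly decrease (and it cannot descend into a leaf when $\SetCard{S} \geq 2$). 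At the stopping vertex $\Vertex$, both of its children's subtrees contain at most $\SetCard{S}/2$ elements of $S$ by the stopping rule. For $\SetCard{S \cap \Subtree{\Vertex}}$, either $\Vertex$ is the root, in which case $\Subtree{\Vertex} = \Set{\Vertex}$ contributes at most one element; or $\Vertex$ was entered from its parent, so the subtree rooted at $\Vertex$ contains strictly more than $\SetCard{S}/2$ of $S$, and the identity $\SetCard{S \cap \Subtree{\Vertex}} = \SetCard{S \cap \Set{\Vertex}} + \SetCard{S} - \SetCard{S \cap \mathrm{subtree}(\Vertex)}$ gives $\SetCard{S \cap \Subtree{\Vertex}} < \SetCard{S}/2 + 1$, which forces $\SetCard{S \cap \Subtree{\Vertex}} \leq \Ceil{\SetCard{S}/2}$ by integrality.

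Because FindSibling selects a pivot that minimizes the maximum of the three pieces, the separator lemma guarantees that $\SetCard{S}$ shrinks to at most $\Ceil{\SetCard{S}/2}$ in each iteration, so $\SetCard{S}$ reaches $1$ within $\lceil \log_2 \SetCard{\Tree} \rceil$ iterations. The main subtlety is the separator claim itself: because $\Vertex$ itself belongs to $\Subtree{\Vertex}$, a naive application of the classical unrooted-tree centroid theorem leaves an additive slack of $+1$ in the bound on $\SetCard{S \cap \Subtree{\Vertex}}$, and the walk-down argument above is what absorbs that slack into the ceiling to produce the clean halving needed for the $\log_2 \SetCard{\Tree}$ query bound.
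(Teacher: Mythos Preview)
Your correctness argument (i) and your separator argument (ii) are both sound, and in fact your walk-down proof of the separator claim is a cleaner alternative to the paper's approach: the paper instead invokes Jordan's centroid theorem on the induced tree $\TreeI{S}$, notes that the extra $+1$ from $\Vertex \in \Subtree{\Vertex}$ can only bite when $\SetCard{S}$ is odd, and for even $\SetCard{S}$ argues that if $\SetCard{\Subtree{\Vertex} \cap S} = 1 + \SetCard{S}/2$ one should switch to the parent of $\Vertex$. Your descent argument absorbs that case analysis into a single integrality step.

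There is, however, a small quantitative gap. Your recurrence $\SetCard{S} \mapsto \Ceil{\SetCard{S}/2}$ only gives $\lceil \log_2 \SetCard{\Tree} \rceil$ queries, which is what you state, but the lemma claims $\log_2 \SetCard{\Tree}$ without the ceiling (and this is what is summed in Theorem~\ref{thm:clean-insertion}). The paper closes this gap by maintaining the invariant that $\TreeI{S}$ is always a full binary tree, so $\SetCard{S}$ is always odd; in particular $\SetCard{S}=4$ never arises and $\SetCard{S}=3$ is resolved in a single query. With these base cases the recurrence $\QC{k} \leq 1 + \QC{\Ceil{k/2}}$ for $k \geq 5$, $\QC{k} \leq 1$ for $k \leq 4$, yields $\QC{k} \leq \log_2 k$ exactly. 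To match the lemma as stated you would need to add this parity observation (or an equivalent sharpening) to your analysis.
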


\begin{proof}
Inductively, the induced subgraph \TreeI{S} always forms a binary tree
in which each internal node has two children.
In verifying the correctness of the algorithm,
notice first that the pivot vertex \Vertex is indeed never a leaf.
That is because for any leaf \Vertex, we would have
$\Subtree{\Vertex} \cap S = S$,
whereas for \Vertex's parent \VertexP, we would have that
$\max(\SetCard{\Subtree[L]{\VertexP} \cap S},
	\SetCard{\Subtree[R]{\VertexP} \cap S},
	\SetCard{\Subtree{\VertexP} \cap S}) 
	< \SetCard{S} = \SetCard{\Subtree{\Vertex} \cap S}$.
The fact that the update of $S$ is correct follows from
Proposition~\ref{prop:pivot-query}.

% Now consider an ordinal query $\Set{\ElS, \ElSL, \ElSR}$
% with pivot vertex \Vertex, and the three possible responses:

% \begin{itemize}
% \item If \ElS and \ElSL are closer to each other than to \ElSR,
% then \ElS[i] must belong to the left subtree of \Vertex.
% Thus, eliminating all other vertices of \Tree is correct.
% \item Similarly, if \ElS and \ElSR are closer to each other than to \ElSL,
% then \ElS[i] must belong to the right subtree of \Vertex.
% Thus, eliminating all other vertices of \Tree is correct.
% \item If \ElSL and \ElSR are closer to each other than either is to \ElS,
% then \ElS does not belong to either subtree rooted below \Vertex.
% Thus, all descendants of \Vertex (but not \Vertex itself) are
% eliminated as candidates for \Vertex[i].
% Thus, eliminating all vertices of \Tree in subtrees below \Vertex is correct.
% \end{itemize}

Next, we analyze the number of ordinal queries made.
By the result of Jordan \cite{jordan:1869:assemblages},
\TreeI{S} has a separator node \Vertex with the property that each of 
$\Subtree[L]{\Vertex} \cap S, \Subtree[R]{\Vertex} \cap S,
(\Subtree{\Vertex} \cap S) \setminus \Set{\Vertex}$
contains at most $\SetCard{S}/2$ nodes.
Because in the third case (\ElSL and \ElSR are closest),
the node \Vertex remains in $S$,
the size of $S$ might change to $1 + \SetCard{S}/2$ instead of $\SetCard{S}/2$.
However, notice that this can only happen when $\SetCard{S}$ is odd.
For when $\SetCard{S}$ is even, if \Subtree{\Vertex} contained
$1 + \SetCard{S}/2$ nodes from $S$,
then choosing the parent \VertexP of \Vertex instead
would strictly decrease the size of \Subtree{\VertexP},
while the size of each of the subtrees of \VertexP is bounded by $\SetCard{S}/2$.
Thus, we obtain that the new size of $S$ is at most $\Ceil{\SetCard{S}/2}$.

When $\SetCard{S} = 3$, a single query suffices to identify \Vertex[i];
the case $\SetCard{S} = 4$ cannot arise, as there is no binary tree with 4
nodes in which all internal nodes have degree 2.
Thus, the number \QC{k} of required ordinal queries satisfies the
recurrence $\QC{k} \leq 1$ for $k \leq 4$ and
$\QC{k} \leq 1 + \QC{\Ceil{k/2}}$ for $k \geq 5$.
An easy induction proof shows that $\QC{k} \leq \log k$,
so the number of ordinal queries required to find \Vertex[i] in \Tree
is at most $\log \SetCard{\Tree}$.
\end{proof}

\begin{theorem} \label{thm:clean-insertion}
When the answers to all ordinal queries are correct,
Algorithm~\ref{alg:insertion-clustering}
learns the correct hierarchical clustering of $n$ \elements
using at most $n \log_2 n$ ordinal queries.
\end{theorem}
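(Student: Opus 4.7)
The plan is to prove both correctness and the query bound by induction on the iteration index $i$ in Algorithm~\ref{alg:insertion-clustering}. The correctness invariant I would maintain is $\Tree[i] \equiv \TreeI[*]{\ElS[1], \ldots, \ElS[i]}$. The base case $i = 2$ is immediate because there is a unique binary hierarchy on two leaves. For the inductive step, Proposition~\ref{prop:sibling} pinpoints the unique vertex $v_i$ in $\TreeI[*]{\ElS[1], \ldots, \ElS[i-1]}$ to which $\ElS[i]$ must be attached as a leaf-sibling; the inductive hypothesis transports this vertex to a unique matching vertex in $\Tree[i-1]$, and Lemma~\ref{lem:clean-insertion} guarantees that the call $\FindSibling(\Tree[i-1], \ElS[i])$ returns exactly that vertex under noise-free responses. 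Because the explicit insertion step of Algorithm~\ref{alg:insertion-clustering} mirrors the construction in Proposition~\ref{prop:sibling}, the invariant is preserved, and setting $i = n$ yields $\Tree[n] \equiv \TreeOpt$.

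For the query count, I would use the observation that $\Tree[i-1]$ is a binary tree with $i-1$ leaves in which every internal node has exactly two children (an invariant maintained by the algorithm); it therefore has $i-2$ internal nodes, so $|V(\Tree[i-1])| = 2i-3$. Lemma~\ref{lem:clean-insertion} then bounds the number of ordinal queries used in iteration $i$ by $\log_2(2i-3)$. Summing over $i = 3, \ldots, n$ (iterations $1$ and $2$ require no queries), the total number of ordinal queries is at most
\[
\sum_{i=3}^{n} \log_2(2i-3).
\]

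It remains to verify that this sum is at most $n \log_2 n$. Using $2i-3 \leq 2(i-1)$, one has $\log_2(2i-3) \leq 1 + \log_2(i-1)$, so the sum is bounded by $(n-2) + \log_2((n-1)!)$. Combined with Stirling's bound $\log_2((n-1)!) \leq (n-1)\log_2(n-1) - (n-1)\log_2 e + O(\log n)$ and the fact that $\log_2 e > 1$, this gives a total of the form $n \log_2 n - \Omega(n)$ for all $n$ above a small absolute constant, with the finitely many remaining small cases checked by direct computation. No step of this plan is a genuine obstacle; the only subtlety worth flagging is that the tree $\Tree[i-1]$ contains $2i-3$ vertices (both leaves and internal nodes), not just $i-1$, so the per-iteration bound $\log_2(2i-3)$ must be passed through $1 + \log_2(i-1)$ before the sum cleanly telescopes into $\log_2((n-1)!)$.
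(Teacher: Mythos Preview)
Your proposal is correct and follows essentially the same approach as the paper: correctness via the inductive invariant $\Tree[i] \equiv \TreeI[*]{\ElS[1], \ldots, \ElS[i]}$ from Proposition~\ref{prop:sibling} and Lemma~\ref{lem:clean-insertion}, and the query bound via $\sum_{i=3}^{n}\log_2(2i-3) \le (n-2) + \log_2((n-1)!)$ followed by Stirling's inequality. Your explicit mention of handling small $n$ by direct computation is, if anything, slightly more careful than the paper's asymptotic chain.
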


\begin{proof}
The correctness of the algorithm follows by repeatedly applying
Proposition~\ref{prop:sibling},
and using that by Lemma~\ref{lem:clean-insertion},
each iteration uses the correct \Vertex[i].

By Lemma~\ref{lem:clean-insertion}, because \Tree[i] has $2i - 1$ nodes
($i$ \element leaves and $i - 1$ internal nodes),
inserting \element \ElS[i] into \Tree[i-1] for
$i = 3, \ldots, n$ requires at most $\log_2(2i-3)$ ordinal queries.
Thus, the total number of ordinal queries is at most  
\Eat{\begin{align*}
\sum_{i = 3}^{n} \log_2 (2i - 3)
& < \; \sum_{i = 2}^{n - 1} (1 + \log_2 i)
\; < \; n + \log_2((n - 1)!)
\; \stackrel{(*)}{\leq} \;
   n + \left(n \log_2 n - n \log_2 e + O(\log n) \right)\\
& < n \log_2 n.
\end{align*}}
%% only for SODA18:
\begin{align*}
\sum_{i = 3}^{n} \log_2 (2i - 3)
< \; \sum_{i = 2}^{n - 1} (1 + \log_2 i)
< n + \log_2((n - 1)!) \\
\stackrel{(*)}{\leq}
n + \left(n \log_2 n - n \log_2 e + O(\log n) \right)
< n \log_2 n.
\end{align*}
The step labeled (*) used Stirling's inequality.
\end{proof}

%% Dealing with Noisy Feedback

\section{Dealing with Noisy Feedback}
\label{sec:insertion-noisy}

In this section, we turn our attention to the noisy model:
the response to any ordinal query is correct independently
with probability $p > \Half$, and adversarially incorrect otherwise.
As discussed earlier, a very similar model has been studied in
\cite{brown-truszkowski:2011:phylogeny-quartet}
(see Section~\ref{sec:introduction}).
Our main result is the following theorem:

\begin{theorem} \label{thm:insertion-noisy}
When each query response is correct with constant probability $p > \Half$,
and adversarially incorrect with probability $1 - p$,
there is an algorithm that uses $O (n \log n + n \log (1/\Err))$ ordinal
queries, and learns the correct hierarchical clustering
with probability at least $1 - \Err$.
\end{theorem}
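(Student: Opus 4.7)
The plan is to reduce Theorem~\ref{thm:insertion-noisy} to a robust analogue of the \FindSibling routine of Section~\ref{sec:insertion-clean}. If we can exhibit a subroutine that, given \Tree[i-1] and \ElS[i], returns the correct sibling \Vertex[i] with probability at least $1-\Err'$ using $O(\log n + \log(1/\Err'))$ ordinal queries, then using it inside Algorithm~\ref{alg:insertion-clustering} with $\Err' = \Err/n$ and taking a union bound over the at most $n$ insertions yields overall success probability at least $1-\Err$ at total query cost
\[
\sum_{i=3}^{n} O\bigl(\log n + \log(n/\Err)\bigr) \;=\; O\bigl(n \log n + n \log(1/\Err)\bigr),
\]
which matches the theorem. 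Hence the entire burden is to construct such a noise-tolerant sibling-finder.

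For the subroutine, I would first reduce to a noisy vertex-query problem: as indicated in Section~\ref{sec:vertex-queries}, an ordinal query with pivot \Vertex already plays the role of a single three-way vertex query at the internal node \Vertex in \Tree[i-1], and each call is correct independently with probability $p > \Half$. The task thus becomes noisy location of an unknown target \Vertex[i] in the known tree \Tree[i-1]. Neither off-the-shelf noisy-search algorithm suffices by itself: the algorithm of \cite{2016:binary-search} pays an extra $\log(1/\Err')$ factor multiplicatively with $\log n$, while that of \cite{feige-raghavan-peleg-upfal:1994:noisy} pays the diameter of the search region additively. The plan is to combine them: use the multiplicative-weights machinery of \cite{2016:binary-search} as a \emph{localizer} to cut \Tree[i-1] down to a subtree \TreeP of diameter only $O(\log n)$, and then use the algorithm of \cite{feige-raghavan-peleg-upfal:1994:noisy} as a \emph{finisher} on \TreeP.

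Concretely, Stage~1 maintains a weight distribution $\pi$ over candidate vertices, repeatedly queries a weighted Jordan separator as in Lemma~\ref{lem:clean-insertion}, and multiplicatively updates $\pi$ by a factor proportional to $p$ on the subtree consistent with the (possibly noisy) response and by $1-p$ on the other two subtrees. A drift argument shows that the log-weight of the true target \Vertex[i] evolves as a submartingale with positive drift depending only on $p$, so after $\Theta(\log n + \log(1/\Err'))$ queries one can output the minimal subtree \TreeP carrying all but $\Err'/2$ of the posterior, and a Freedman-style concentration inequality applied to the log-weight guarantees both that $\Vertex[i] \in \TreeP$ with probability at least $1 - \Err'/2$ and that $\mathrm{diam}(\TreeP) = O(\log n)$. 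Stage~2 then runs the noisy binary search of Feige et al.\ inside \TreeP; since its diameter is $O(\log n)$, it locates \Vertex[i] using a further $O(\log n + \log(1/\Err'))$ simulated vertex queries with conditional failure probability at most $\Err'/2$. A union bound over the two stages yields the claimed per-insertion error $\Err'$, and the outer reduction above completes the proof.

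The main obstacle will be the Stage~1 analysis: we need $\Vertex[i] \in \TreeP$ with probability $1 - \Err'/2$ while paying only $O(\log n + \log(1/\Err'))$ queries, whereas naive repetition of \cite{2016:binary-search} would incur the multiplicative $O(\log n \cdot \log(1/\Err'))$ blow-up that we are trying to avoid. The key is that Freedman's inequality, applied to the bounded-increment submartingale given by the log-weight of \Vertex[i], turns the boosted confidence into a single additive $\log(1/\Err')$ penalty rather than a multiplicative factor. Getting this drift bookkeeping right, together with verifying that the subtree carrying the surviving posterior mass really has diameter $O(\log n)$ (which follows from the weighted-centroid property of the queried pivots), is the delicate part; once it is in place, Stage~2 is essentially off-the-shelf and composition with the union-bound reduction finishes Theorem~\ref{thm:insertion-noisy}.
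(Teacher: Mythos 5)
Your proposal follows essentially the same route as the paper: reduce to a per-insertion sibling-finder run with error parameter $\Err/n$ and a union bound, recast the problem as noisy target location under vertex queries, use the multiplicative-weights localizer of \cite{2016:binary-search} (whose guarantee the paper simply cites as a black box rather than reproving via a Freedman-type drift argument) to shrink the candidate set to $O(\log n + \log(1/\Err))$ nodes, and finish with the Feige et al.~tree walk, whose cost is only additive in the diameter of the induced subtree. The one detail you gloss over is that a single ordinal query with pivot \Vertex cannot confirm \Vertex itself as the target --- the paper devotes Lemma~\ref{lem:simulation} to simulating one vertex query by a constant number of ordinal queries at \Vertex and its parent --- but this is a constant-factor issue that does not change the argument.
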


Since ordinal queries are only used in the call to \FindSibling
(in Line~\ref{line:vertex-search}) inside the algorithm
\InsertionClustering, it suffices to show how to find \Vertex[i]
in each iteration with high probability
using $O(\log n + \log(1/\delta))$ ordinal queries.
This is guaranteed by the following lemma, the focus of this section.

\begin{lemma} \label{lem:noisy-sibling}
Assume that each response to an ordinal query is correct
independently with probability $p > \Half$,
and adversarially incorrect with probability $1 - p$.
There exists an adaptive algorithm with the following property:
Given a hierarchical clustering \Tree
with $i$ leaves, an \element \ElS to insert,
and an error guarantee $\Err > 0$,
the algorithm finds the sibling \VertexBar of \ElS
using at most\footnote{The big-$O$ notation hides
constants depending on $p$, but not on any other parameter.}
$O(\log i + \log(1/\Err))$ ordinal queries.
\end{lemma}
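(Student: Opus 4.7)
The plan is to view \FindSibling as a noisy search for the sibling target \VertexBar inside the current hierarchy \Tree (as characterized by Proposition~\ref{prop:sibling}) and to combine two known techniques from the noisy vertex-query literature into a two-stage hybrid. By Proposition~\ref{prop:pivot-query}, an ordinal query for \ElS with internal pivot \Vertex behaves exactly like a noisy vertex query on \Tree: its correct answer identifies which of the three subtrees $\Subtree[L]{\Vertex}, \Subtree[R]{\Vertex}, \Subtree{\Vertex}$ contains \VertexBar, and by hypothesis this answer is correct independently with probability $p > \Half$.

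For the first stage, I would run a multiplicative-weights scheme in the spirit of \cite{2016:binary-search}. Assign each node of \Tree a weight, initially uniform. In each round, compute a weighted separator pivot \Vertex (an internal node for which each of $\Subtree[L]{\Vertex}, \Subtree[R]{\Vertex}, \Subtree{\Vertex}$ has at most half the total weight, whose existence follows from Jordan's separator lemma applied to the weighted tree). Issue an ordinal query for \ElS with pivot \Vertex, multiply the weights of the two "inconsistent" subtrees by some factor $\beta < 1$, and leave the consistent subtree unchanged (or scale by $1$). Choosing $\beta$ so that the expected log-weight change of \VertexBar is a positive constant $c_p$ per round while the expected total log-weight change is negative, a standard Hoeffding/Azuma argument on the log-weight of \VertexBar shows that after $O(\log i)$ rounds, \VertexBar's weight exceeds some constant fraction of the total weight with probability at least $3/4$, and therefore the surviving "heavy" candidates form a connected subtree $S \subseteq V(\Tree)$ of diameter $O(\log i)$ containing \VertexBar.

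For the second stage, I would restrict attention to $S$ and invoke the noisy tree-search algorithm of \cite{feige-raghavan-peleg-upfal:1994:noisy} on it. Their guarantee yields the correct \VertexBar using $O(\mathrm{diam}(S) + \log(1/\Err))$ pivot queries with success probability $1 - \Err/3$, conditional on \VertexBar remaining in $S$. Plugging in $\mathrm{diam}(S) = O(\log i)$ gives $O(\log i + \log(1/\Err))$ queries for this stage. To drive the overall success probability above $1 - \Err$, I would either boost the first stage to success probability $1 - \Err/3$ by running it for $O(\log i + \log(1/\Err))$ rounds (concentration improves the required number of rounds by an additive $\log(1/\Err)$), or repeat the two-stage procedure $O(\log(1/\Err))$ times and use a short verification phase of $O(\log(1/\Err))$ ordinal queries that checks a candidate sibling against a few random triples and takes a majority.

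The main obstacle, in my view, is the joint control of the first stage's output: the analysis must show simultaneously that \VertexBar survives and that the heavy-weight candidates form a \emph{small-diameter connected subtree} of \Tree, so that the diameter-dependent bound of \cite{feige-raghavan-peleg-upfal:1994:noisy} actually helps. The three-way (rather than two-way) answer structure of pivot queries complicates the potential-function analysis relative to classical noisy binary search, and demands a choice of multiplicative update and separator tailored to the rooted "branch in three" topology. Secondary technicalities include that pivot queries are only defined at internal nodes (while \VertexBar may be a leaf), and that the second-stage algorithm of \cite{feige-raghavan-peleg-upfal:1994:noisy} must be re-examined to verify it can be run on the subtree induced by $S$ using only ordinal queries in the original tree, using an appropriate representative leaf of each branch for the pivot simulation.
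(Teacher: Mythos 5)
Your two-stage architecture is exactly the paper's: a single invocation of the multiplicative-weights candidate reduction of \cite{2016:binary-search} to shrink the candidate set to $O(\log i + \log(1/\Err))$ nodes, followed by the counter-based tree walk of \cite{feige-raghavan-peleg-upfal:1994:noisy} on the induced subtree. One simplification relative to your outline: the diameter of \TreeI{S} is trivially bounded by $\SetCard{S}$, so you do not need to prove that the heavy candidates form a small-diameter connected subtree of \Tree --- a cardinality bound on $S$ suffices, which is what Lemma~\ref{lem:multiplicative-weights} delivers.

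The one genuine gap is your opening claim that an ordinal query for \ElS with internal pivot \Vertex ``behaves exactly like a noisy vertex query.'' It does not: a vertex query can \emph{confirm} \Vertex as the target, whereas an ordinal query with pivot \Vertex that returns \Subtree{\Vertex} cannot distinguish ``the sibling is \Vertex itself'' from ``the sibling lies elsewhere in $\Subtree{\Vertex} \setminus \Set{\Vertex}$.'' This matters for both stages, and especially the second: the potential argument of Feige et al.\ hinges on correct \emph{confirmations} incrementing the counter at the target, so without a way to elicit confirmations the walk never accumulates counter mass at \VertexBar and the algorithm cannot terminate correctly. The paper closes this with a dedicated simulation (\SimulateVertexQuery and Lemma~\ref{lem:simulation}): to answer a vertex query at an internal non-root \Vertex, make $k_p$ ordinal queries with pivot \Vertex and, if the majority points to \Subtree{\Vertex}, another $k_p$ with pivot its parent \VertexP; choosing $k_p$ so that each majority is correct with probability at least $\sqrt{p}$ makes the simulated vertex query correct with probability at least $p$ at constant overhead (roots and leaves need only one query each). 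You flag the leaf/internal-node issue as a secondary technicality, but this confirmation issue is the load-bearing one; once the simulation is added, the rest of your outline goes through essentially as in the paper.
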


\begin{emptyextraproof}{Theorem~\ref{thm:insertion-noisy}}
Using Lemma~\ref{lem:noisy-sibling},
the proof of Theorem~\ref{thm:insertion-noisy} is straightforward:
set $\Err' = \Err/n$ and apply Lemma~\ref{lem:noisy-sibling} to each
invocation of \FindSibling.
Then, by the union bound, all iterations will succeed with 
probability at least $1-\Err$.
The total number of ordinal queries will be at most
\Eat{\[
  \sum_{i=3}^n O(\log (2i-3) + \log(1/\Err'))
  \; \leq \; n \cdot O(\log n + \log (n/\Err))
  \; = \; O(n (\log n + \log(1/\Err))).
\]}
%% only for SODA18:
\begin{align*}
\sum_{i=3}^n O(\log (2i-3) + \log(1/\Err'))
\leq n \cdot O(\log n + \log (n/\Err))
= O(n (\log n + \log(1/\Err))). \QED
\end{align*}
\end{emptyextraproof}

\subsection{Simulating Vertex Queries with Ordinal Queries}

To avoid essentially repeating a large amount of prior analysis,
we begin by making the connection between ordinal queries
and vertex queries more explicit:
we show how to simulate one vertex query using a constant number of
ordinal queries.

To illustrate the reduction most easily, consider the case $p = 1$
in which all queries are answered correctly.
To simulate a vertex query to the root, simply make an ordinal 
query for \ElS with the root as pivot.
This one query suffices, since the response reveals the subtree that 
\ElS belongs to,
or otherwise that the root is the sibling of \ElS (if \ElS does not
belong to either of the subtrees rooted at the root's children).
To simulate a vertex query to a leaf node \Vertex,
instead make an ordinal query with the parent \VertexP of \Vertex as the pivot.
If the response points to \Vertex, then return \Vertex as the sibling;
otherwise, \ElS is not the sibling of \Vertex.

Finally, to simulate a vertex query to an internal vertex \Vertex
that is not the root,
let \VertexP be the parent of \Vertex in \Tree,
and make two ordinal queries for \ElS,
one with pivot \Vertex and the other with pivot \VertexP.
If in response to the ordinal query with pivot \Vertex,
a child of \Vertex is identified as containing \ElS in its subtree,
then this case directly corresponds to the corresponding vertex query
response at \Vertex.
If \Subtree{\Vertex} is known to contain \ElS,
then the response to the query with pivot \VertexP clarifies whether
the target is in fact \Vertex itself, or contained in
$\Subtree{\Vertex} \setminus \Set{\Vertex}$.
Thus, the responses to the two queries together contain (at least)
all the information obtained from the vertex query. 
(Notice that applying this reduction, we could have obtained the
result of Lemma~\ref{lem:clean-insertion} with a loss of a factor of 2
in the number of queries required.)

We now show how to generalize this idea to $p \in (\Half, 1)$.
The cases of the root vertex and a leaf node are handled exactly as in
the case $p = 1$; in particular, using only a single ordinal query.
To simulate a vertex query to an internal vertex \Vertex,
we choose a suitably large constant $k_p$ (determined below)
and make $k_p$ ordinal queries for \ElS with pivot \Vertex.
If a strict majority of these queries
places \ElS in one of the subtrees below \Vertex,
then that subtree is returned.
If a strict majority of the queries with pivot \Vertex places \ElS in
\Subtree{\Vertex}, then \Vertex's parent \VertexP 
is queried $k_p$ times to determine
if the desired target is equal to \Vertex or contained in
$\Subtree{\Vertex} \setminus \Set{\Vertex}$.
The former is returned as the answer if a strict majority of the responses
place \ElS in the subtree of \VertexP also containing \Vertex.
The latter is returned when a strict majority places \ElS in the other
subtree below \VertexP, or a strict majority place it in
\Subtree{\VertexP}.
Finally, in all other cases, the responses are inconclusive,
and the simulation algorithm returns an arbitrary response.
Formally, the algorithm
(which has a number of tedious case distinctions)
is given as Algorithm~\ref{alg:simulate}.
%in the appendix.

\InsertAlgorithmInOneColumn{\SimulateVertexQuery $(\Tree, \Vertex, \ElS)$}{alg:simulate}{
\IF{\Vertex is the root of \Tree}
	\STATE{Make one ordinal query for \ElS with pivot \Vertex,
	and return its result.\\
	(If $\Subtree{\Vertex} = \Set{\Vertex}$,
	the response reveals the root as the proposed answer.)}
\ELSIF{\Vertex is a leaf}
	\STATE{Let \VertexP be the parent of \Vertex in \Tree.}
	\STATE{Make one ordinal query for \ElS with pivot \VertexP.}
	\IF{the response places \ElS at \Vertex}
		\RETURN{the vertex \Vertex as a proposed answer.}
	\ELSE
	\RETURN{the direction towards \VertexP.}
\ENDIF
\ELSE
	\STATE{Let \VertexP be the parent of \Vertex
	and $k_p$ be an odd constant
	such that $1 - e^{-k_p(2p - 1)^2/2} \geq \sqrt{p}$.}
	\STATE{Make $k_p$ ordinal queries for \ElS with pivot \Vertex.}
	\IF{a strict majority of responses place \ElS in \Subtree[L]{\Vertex}}
		\RETURN{the direction towards the left child of \Vertex.}
	\ELSIF{a strict majority of responses place \ElS in \Subtree[R]{\Vertex}}
		\RETURN{the direction towards the right child of \Vertex.}
	\ELSIF{a strict majority of responses place \ElS in \Subtree{\Vertex}}
		\STATE{Make $k_p$ ordinal queries for \ElS with pivot \VertexP.}
		\IF{a strict majority of responses place \ElS in the subtree of
			\VertexP containing \Vertex}
			\RETURN{the vertex \Vertex as a proposed answer.}
		\ELSIF{a strict majority of responses place \ElS in
		the other subtree of \VertexP or \Subtree{\VertexP}}
			\RETURN{the direction towards \VertexP.}
		\ELSE
			\RETURN{an arbitrary response.}
		\ENDIF
	\ELSE
		\RETURN{an arbitrary response.}
\ENDIF
\ENDIF
}

\begin{lemma} \label{lem:simulation}
If each ordinal query is answered correctly independently
with probability $p$, then (for a suitable choice of $k_p$, specified below),
\SimulateVertexQuery $(\Tree, \Vertex, \ElS)$ returns the correct
response (either \Vertex or the direction in which to search)
to the vertex query for \Vertex, with probability at least $p$.
\end{lemma}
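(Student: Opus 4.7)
The plan is to prove Lemma~\ref{lem:simulation} by a direct case analysis mirroring the three branches of Algorithm~\ref{alg:simulate}. The only nontrivial probabilistic ingredient will be Hoeffding's inequality applied to the indicator variables ``the $j$-th ordinal query is answered correctly,'' which are independent Bernoulli trials with success probability $p$ by the noise model. Since $k_p$ is odd and satisfies $1 - e^{-k_p(2p-1)^2/2} \geq \sqrt{p}$, Hoeffding gives that a strict majority among $k_p$ such trials lands on the (unique) correct response with probability at least $\sqrt{p}$.

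I would first dispatch the two single-query branches. When \Vertex is the root of \Tree, the lone ordinal query with pivot \Vertex is answered correctly with probability $p$, and by Proposition~\ref{prop:pivot-query} the three possible responses are in one-to-one correspondence with the three possible vertex-query answers at the root, so correctness of the ordinal query is equivalent to correctness of the simulated vertex query. When \Vertex is a leaf with parent \VertexP, the lone ordinal query with pivot \VertexP is correct with probability $p$. If \ElS's true sibling is \Vertex, then the correct ordinal response is the singleton subtree containing \Vertex, which the algorithm translates into ``return \Vertex''; otherwise the correct vertex-query answer is ``direction towards \VertexP,'' and the algorithm returns this whenever the ordinal response is anything other than the subtree containing \Vertex (which covers the correct response in that case). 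Hence in both leaf sub-cases the success probability is at least $p$.

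For the main case, in which \Vertex is an internal non-root node with parent \VertexP, I would split on the true location of \ElS's sibling \VertexBar. If \VertexBar lies in \Subtree[L]{\Vertex} or \Subtree[R]{\Vertex}, the correct pivot-\Vertex response is a single fixed direction, so Hoeffding yields a strict majority on it with probability at least $\sqrt{p} \geq p$ and the algorithm returns that direction. If instead $\VertexBar = \Vertex$, or if $\VertexBar \in \Subtree{\Vertex}\setminus\Set{\Vertex}$, the correct first-round response is \Subtree{\Vertex}, so the algorithm proceeds to the second batch with pivot \VertexP. Because the two batches are independent, each of their majority events succeeds with probability at least $\sqrt{p}$, and the joint probability is at least $(\sqrt{p})^2 = p$. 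In the former sub-case the correct pivot-\VertexP response is the subtree containing \Vertex, whose strict majority causes the algorithm to return \Vertex; in the latter, every correct pivot-\VertexP response already lies inside the combined ``other subtree of \VertexP or \Subtree{\VertexP}'' category checked by the algorithm, so a strict majority of correct responses automatically yields a strict majority in that combined category, and the algorithm returns ``direction towards \VertexP.''

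The step requiring the most care will be the combined category in the last sub-case: a priori one might worry that the adversary could split correct responses between the two pieces of that category, or use the $1-p$ slack to manufacture a spurious majority in the ``subtree containing \Vertex'' branch. Neither threat materializes, because Hoeffding already lower-bounds the number of \emph{correct} responses by a strict majority, and all correct responses in this sub-case are contained in the combined category. Once this is settled, collecting the sub-cases shows the simulation succeeds with probability at least $p$; the only remaining routine checks are that $(1-e^{-k_p(2p-1)^2/2})^2 \geq p$ follows from $1-e^{-k_p(2p-1)^2/2} \geq \sqrt{p}$, and that such a finite odd $k_p$ exists for every $p>\Half$.
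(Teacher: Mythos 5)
Your proposal is correct and follows essentially the same route as the paper: Hoeffding gives a strict-majority success probability of at least $\sqrt{p}$ per batch of $k_p$ queries, independence of the two batches gives $(\sqrt{p})^2 = p$, and the rest is the verification that correct (majorities of) responses always map to the correct vertex-query answer. The paper compresses that last verification into a single sentence, whereas you spell out the case analysis (root, leaf, internal node, and the combined ``other subtree of $v'$ or $\Subtree{v'}$'' category), but the underlying argument is identical.
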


\begin{proof}
By the additive Hoeffding Bound \cite{hoeffding:1963:probability},
the probability that a majority out of $k$ query responses is correct
for every $p > \Half$
is at least $q(k) := 1- e^{-k(2p - 1)^2/2}$.
For odd $k$, any majority is strict.
For any $p > \Half$, we have that $e^{-(2p - 1)^2/2} < 1$,
so $q(k) \to 1$ as $k \to \infty$.
In particular, $q(k) \geq \sqrt{p}$ for $k$ sufficiently large.
Fixing such a $k_p$, the probability that a strict majority
of the queries both with pivots \Vertex, \VertexP are correct
is at least $(\sqrt{p})^2 = p$.
Whenever strict majorities of both sets of ordinal queries are correct,
the output of \SimulateVertexQuery is correct, completing the proof. 
\end{proof}

\subsection{Reanalyzing Binary Search with Noisy Vertex Queries}

In light of Lemma~\ref{lem:simulation},
for the purpose of analysis in this section,
we can ignore ordinal queries,
and instead focus on finding \ElS's sibling \VertexBar using vertex queries:
in response to querying \Vertex, the algorithm is told either that
$\Vertex = \VertexBar$ is the correct sibling of \ElS,
or is given a subtree
(rooted at one of the children of \Vertex,
or obtained by removing \Vertex and its subtrees)
that must contain \ElS.
This information is correct with probability (at least) $p$.

\cite{2016:binary-search} gave an algorithm for \binarysearch on
undirected graphs with noisy responses that solves this problem with
probability at least $1 - \Err$, using at most $O(\log n + \log^2 (1/\Err))$
vertex queries \cite[Theorem 8]{2016:binary-search}.
Since we need $\Err = O(1/n)$ to be able to take a Union Bound,
applying this result directly would only give us an overall guarantee of
$O(n \log^2 n)$.
Hence, in this section, we show how to combine algorithms of 
\cite[Algorithm 2]{2016:binary-search} and of
Feige et al.~\cite{feige-raghavan-peleg-upfal:1994:noisy}
(see also related algorithms for noisy \binarysearch on the line
\cite{BenOr-hassidim:2008:noisy-binary-search,%
karp-kleinberg:2007:noisy}) 
to obtain an algorithm finding \VertexBar in $O(\log n + \log (1/\Err))$ queries.
In contrast to \cite{2016:binary-search},
the dependence of this new algorithm's number of queries on $p$ is not
information-theoretically optimal, but it obtains an improved
dependence on \Err in return.

Algorithm~2 of \cite{2016:binary-search} is a
variant of a multiplicative weights algorithm which starts with
a candidate set $S$ of vertices, and within essentially $O(\log |S|)$ rounds
reduces it to only $O(\log |S|)$ ``likely'' remaining candidates. 
The overall algorithm in \cite{2016:binary-search} works by first
reducing the candidate set of \emph{all} nodes to just
$O(\log n)$ nodes, then reducing that further to $O(\log \log n)$
nodes with a second invocation of the multiplicative weights algorithm,
and finally querying each remaining candidate frequently enough to
identify the correct node with sufficiently high probability.
(Dependencies on \Err are omitted in our high-level overview.)
As part of our algorithm here, we just use a single invocation of
the Multiplicative Weights algorithm
(Algorithm 2 of \cite{2016:binary-search}),
with parameters modified to obtain a better dependence on \Err at the
cost of a worse dependence on $p$. 
The performance of of this algorithm (used as a Black Box here)
is summarized by Lemma~\ref{lem:multiplicative-weights},
which is obtained from Lemma~5 of \cite{2016:binary-search} by setting
$\lambda = \frac{1 + p \log p + (1-p) \log(1-p)}{2 \log(p / (1-p))}$.

\begin{lemma} \label{lem:multiplicative-weights}
There exists an algorithm which uses at most $O(\log n + \log(1/\Err))$ 
vertex queries and returns a set $S$ of $O(\log n + \log(1/\Err))$ nodes
such that the unknown target is contained in $S$
with probability at least $1-\Err$.
\end{lemma}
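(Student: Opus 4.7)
The plan is to instantiate the multiplicative-weights binary search algorithm of \cite[Algorithm~2]{2016:binary-search} with the parameter $\lambda = \frac{1 + p \log p + (1-p) \log(1-p)}{2 \log(p/(1-p))}$ prescribed in the statement, and invoke their Lemma~5. I would start by recalling the algorithm: it maintains a weight $w_t(v)$ on every candidate vertex $v$ (initially uniform, so $w_0(v) = 1$), and in each round queries a \emph{weighted median}, i.e., a vertex whose removal splits \Tree into components each of total weight at most $\Phi_t / 2$, where $\Phi_t = \sum_v w_t(v)$. After a (possibly noisy) directional vertex-query response, every vertex whose location is consistent with the response has its weight scaled up by a factor tied to $1 + \lambda$, while every inconsistent vertex is scaled down by $1 - \lambda$.

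The analysis tracks two quantities. First, because of the weighted-median rule, $\Phi_t$ contracts per query by a factor depending only on $p$ and $\lambda$ in expectation; this is what drives the $\log n$ term, since $\Phi_0 = n$. Second, in expectation the weight $w_t(\VertexBar)$ of the true target $\VertexBar$ \emph{grows} multiplicatively per query, at a rate given by a KL-like expression in $p$ and $\lambda$. I would then consider the log-ratio $\ln\!\bigl(w_t(\VertexBar) / \Phi_t\bigr)$: its per-step increments are bounded, and an Azuma--Hoeffding argument concentrates it around its expected value.

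The role of the specific $\lambda$ above is to balance the shrinkage of $\Phi_t$ against the growth of $w_t(\VertexBar)$ so that the expected log-ratio grows by a constant $c_p > 0$ (depending only on $p$) per query. Choosing $T = O(\log n + \log(1/\Err))$ then makes the expected log-ratio exceed its typical fluctuation by an additive $\log(1/\Err)$ margin, and Azuma--Hoeffding yields that with probability at least $1 - \Err$ the true target $\VertexBar$ lies among the top $O(\log n + \log(1/\Err))$ vertices by final weight. The output set $S$ is exactly this collection of highest-weight vertices, which matches the size bound in the statement.

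The main obstacle is the concentration step: one has to choose $T$ large enough that the martingale deviation is dominated by the expected drift by a margin of $\log(1/\Err)$, which is precisely what forces the additive $\log(1/\Err)$ terms both in the query count and in the size of $S$. Compared to the $O(\log n + \log^2(1/\Err))$ query bound of \cite[Theorem~8]{2016:binary-search}, the improvement to a single $\log(1/\Err)$ factor is bought by \emph{not} attempting to pin down a unique target and instead outputting the entire surviving candidate set at the end of the single multiplicative-weights phase; this trades the information-theoretically optimal dependence on $p$ (which one gets from iterating the reduction further) for a better dependence on $\Err$.
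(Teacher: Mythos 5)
Your approach is the same as the paper's: the paper proves this lemma purely by citation, stating that it is ``obtained from Lemma~5 of \cite{2016:binary-search} by setting $\lambda = \frac{1 + p \log p + (1-p) \log(1-p)}{2 \log(p/(1-p))}$,'' which is exactly the instantiation you propose, and it explicitly uses the algorithm as a black box. Your supplementary sketch of the internals is broadly faithful but has two inaccuracies relative to the cited algorithm: there the weight update is Bayesian (vertices consistent with the response are multiplied by $p$, inconsistent ones by $1-p$), with $\lambda$ serving only as the slack parameter in the Hoeffding bound on the number of correct responses rather than as a learning rate of the form $1 \pm \lambda$; and the $O(\log n + \log(1/\Err))$ bound on $\SetCard{S}$ must also account for rounds in which the queried vertex is the target itself (whose responses are not directional), which is why the surviving candidate set consists of the high-likelihood vertices together with the at most $O(\log n + \log(1/\Err))$ queried vertices, not merely the top-weight ones. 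Since the paper operates at the level of the black-box citation, neither point affects the correctness of your argument as a proof of the lemma as stated.
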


Once the number of candidate targets has been reduced to $O(\log n)$,
due to the $\log(1/\Err)$ dependence,
a further reduction using the techniques of \cite{2016:binary-search}
seems difficult.
However, at this point, we can apply a modification of a beautiful algorithm
of Feige et al.~\cite{feige-raghavan-peleg-upfal:1994:noisy}.
Translated into our nomenclature, Lemma~3.1 of 
\cite{feige-raghavan-peleg-upfal:1994:noisy} about noisy \binarysearch
can be phrased as follows:

\begin{lemma}[Lemma~3.1 of \cite{feige-raghavan-peleg-upfal:1994:noisy}, restated]
\label{lem:feige-restated}
Let \Tree be a binary tree, and assume that one of the leaves
of \Tree is the unknown target.
In response to a vertex query, assume that the algorithm is given the
correct answer with probability $p$,
and an adversarially incorrect one otherwise.
Then, there is an algorithm which finds the correct target
with probability at least $1-\Err$, 
using $O(\Diameter + \log (1/\Err))$ queries
where \Diameter is the diameter of \Tree.
\end{lemma}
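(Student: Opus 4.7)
The plan is to adapt the biased-walk algorithm of Feige, Raghavan, Peleg, and Upfal to the tree setting. The algorithm maintains a current vertex $v^{(i)}$, initialized at an arbitrary node of \Tree (say, a centroid). At each step $i$, it issues one vertex query at $v^{(i)}$: if the response names a neighbor $u$ as the direction of the target, set $v^{(i+1)} = u$; if the response instead declares $v^{(i)}$ itself to be the target, increment a vote counter $c(v^{(i)})$ and keep $v^{(i+1)} = v^{(i)}$. After $T = \Theta(\Diameter + \log(1/\Err))$ total queries (with constants depending on $p$), the algorithm outputs the leaf with the largest final counter.

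The analysis proceeds in two parts. Let $\ell^\star$ denote the unknown target leaf and define $X_i = d(v^{(i)}, \ell^\star)$. Whenever $v^{(i)} \neq \ell^\star$, the query is correct with probability at least $p$, giving $X_{i+1} = X_i - 1$; with the remaining probability at most $1-p$, adversarial noise can raise $X_{i+1}$ by at most one. Hence $(X_i)$ is dominated by a biased random walk with positive drift $2p - 1$, and a Hoeffding bound on the number of correct responses in $T$ steps shows that for $T = \Theta(\Diameter + \log(1/\Err))$ the walker visits $\ell^\star$ in a constant fraction of its steps, with probability at least $1 - \Err/2$. Second, votes should concentrate at $\ell^\star$: each visit to $\ell^\star$ casts a vote there with probability at least $p$, whereas each visit to any other vertex $v$ casts a vote there with probability at most $1 - p < p$. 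Combined with the fact (provable from the same random-walk analysis) that the walker's number of visits to vertex $v \neq \ell^\star$ decays geometrically in $d(v, \ell^\star)$, a Chernoff-style bound with a union bound over the relevant vertices yields $c(\ell^\star) > c(v)$ for every $v \neq \ell^\star$ with probability at least $1 - \Err/2$.

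The main obstacle is combining these two arguments into a clean $(1 - \Err)$ bound, since adversarial noise can in principle correlate where the walker wanders with where spurious votes accumulate. Following Feige et al., I would handle this with a martingale-based potential argument: for each competing vertex $u$, the signed difference $\Phi_u(i) = c(\ell^\star) - c(u)$ is a submartingale with strictly positive drift (depending only on $p$) from the first time the walker reaches $\ell^\star$. An Azuma--Hoeffding bound then shows that $\Phi_u$ stays positive with failure probability $e^{-\Omega(T - \Diameter)}$ per competitor, and a union bound together with the hitting-time estimate of the first part gives overall success probability $1 - \Err$ in $O(\Diameter + \log(1/\Err))$ vertex queries.
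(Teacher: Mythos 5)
Your algorithm differs from the one the paper actually analyzes in one small but load-bearing detail, and the analysis you sketch has a genuine gap as a result. In the paper's version of the Feige et al.\ walk, a vertex's counter is not a vote tally to be compared at the end: it is a buffer that must be drained before the walker is allowed to leave. Concretely, if the current vertex $q$ has $c(q)>0$ and the response points to a neighbor, the algorithm \emph{decrements} $c(q)$ and stays put; only when $c(q)=0$ does it move, and at the end it outputs the current vertex. This rule makes the whole proof a one-line potential argument: with $\Phi_t(i)=d(t,q_i)-c_i(t)+\sum_{v\neq t}c_i(v)$ for the true target $t$, every correct response decreases $\Phi_t$ by exactly $1$ and every incorrect one increases it by at most $1$, so a single Hoeffding bound on the number of correct responses among $T=O(\Diameter+\log(1/\Err))$ queries forces $\Phi_t(T)<0$, hence $c_T(t)>0$, hence (since only the currently queried vertex can carry a positive counter) the output is $t$. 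No hitting-time estimate, no occupation-time bounds, no union bound over competitors.

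Your version (always move when told to move; output the leaf with the largest counter) forces you to control the counter of \emph{every} competing vertex against an adaptive adversary, and the step where you do this does not hold as stated: $\Phi_u(i)=c(\ell^\star)-c(u)$ is not a per-step submartingale with positive drift. Conditioned on the walker currently sitting at $u\neq\ell^\star$, the expected one-step increment of $\Phi_u$ is $-\Pr[\text{response claims } u \text{ is the target}]\le 0$, and the adversary can deliberately keep the walker at $u$ (by voting for $u$) while accumulating votes there. The drift is positive only after averaging over occupation times, so Azuma--Hoeffding cannot be applied to $\Phi_u$ directly; you would instead need exponential tail bounds on the number of visits to each $u$, uniform over adversary strategies, and then a union bound over up to $2^{\Theta(\Diameter)}$ vertices of a binary tree of diameter \Diameter --- a tradeoff that is delicate precisely when $(1-p)/p\ge\Half$, i.e.\ for $p\le 2/3$. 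All of this is avoidable by adopting the decrement-and-stay rule and the single potential function above. (A further point: the paper needs the variant in which the target may be an internal node, so ``output the leaf with the largest counter'' would also have to be generalized.)
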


Notice two minor inconveniences in applying
Lemma~\ref{lem:feige-restated} for our purposes:
\begin{enumerate}
\item In the specific application of \cite{feige-raghavan-peleg-upfal:1994:noisy},
the binary tree is complete and thus, \Diameter is logarithmic
in size of the tree.
In our application, on the other hand,
after invocation of the Multiplicative Weights algorithm,
the number of candidates (and hence diameter of the tree)
is bounded by $O(\log n)$.
\item \cite{feige-raghavan-peleg-upfal:1994:noisy}
assume that the target is always a leaf node,
whereas in our application, the unknown sibling \VertexBar may be
an internal node. It is very straightforward to modify the algorithm of
Feige et al.~to handle non-leaf targets.
\end{enumerate}

For completeness, we present the modified algorithm and corresponding
analysis in Appendix~\ref{sec:feige}.
Its guarantee is summarized by the following lemma:

\begin{lemma}\label{lem:feige}
Assume that each query response is correct
independently with probability $p > \Half$,
and adversarially incorrect with probability $1 - p$.
There exists an adaptive algorithm with the following property:
Given a tree \Tree of diameter \Diameter,
and error parameter $\Err > 0$,
the algorithm finds the target with probability at least $1-\Err$,
using at most $O(\Diameter + \log(1/\Err))$ vertex queries.
\end{lemma}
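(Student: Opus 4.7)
I would adapt the multiplicative-weights algorithm of \cite{feige-raghavan-peleg-upfal:1994:noisy}, with two modifications: (i) allowing the target to be an internal node as well as a leaf, and (ii) tracking a potential that scales with the diameter $\Diameter$ rather than with $|V(\Tree)|$ directly. The algorithm I propose maintains a weight $w_u \geq 0$ for each vertex $u$ of \Tree, initialized uniformly to $w_u = 1$. In each round, let $W = \sum_u w_u$. If some vertex $v^\star$ already satisfies $w_{v^\star} \geq (1 - \Err/2) W$, return it. Otherwise, pick a \emph{weighted centroid} $v$ of \Tree --- a vertex whose removal leaves each connected component of $\Tree \setminus \{v\}$ with total weight at most $W/2$ (such a $v$ exists by a weighted version of Jordan's theorem whenever no single vertex already carries more than half the total weight, which is precisely handled by the stopping rule). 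Query $v$; its response corresponds to one of the regions $R \subseteq V(\Tree)$: either the singleton $\{v\}$ when the response is ``$v$ is the target,'' or the connected component of $\Tree \setminus \{v\}$ containing the indicated neighbor. Then multiply $w_u$ by $2p$ for $u \in R$ and by $2(1-p)$ for $u \notin R$.

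To analyze correctness, I would track the potential $\Phi_t = \log(W^{(t)}/w^{(t)}_{\Target})$, where \Target denotes the unknown true target. Because \Tree is a binary tree of diameter $\Diameter$, $|V(\Tree)| \leq 2^{\Diameter+1}$ and hence $\Phi_0 = \log|V(\Tree)| = O(\Diameter)$. In each round, the response is correct with probability at least $p$; if correct, $\Target \in R$, so $w_{\Target}$ is scaled by $2p$, while the centroid condition $W_R \leq W/2$ bounds the multiplicative growth of $W$ itself, and an elementary calculation gives that $\Phi$ decreases by at least a positive constant $c_p$ depending only on $p$. If the response is incorrect, $\Phi$ can increase, but only by another bounded constant $c'_p$, since $w_{\Target}$ is still scaled by $2(1-p) > 0$. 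Hence $(\Phi_t)$ is a supermartingale with bounded per-step increments and strictly negative drift $-(p c_p - (1-p) c'_p)$.

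An Azuma--Hoeffding concentration bound then yields that after $T = O(\Phi_0 + \log(1/\Err)) = O(\Diameter + \log(1/\Err))$ queries, $\Phi_T \leq \log(2/\Err)$ with probability at least $1 - \Err$; this is precisely the condition $w_{\Target} \geq (1 - \Err/2) W$ that triggers the stopping rule and returns \Target. The main technical obstacle I anticipate is verifying the per-step drift estimate uniformly over the algorithm's state, particularly in the new case where the centroid $v$ coincides with \Target, so that the ``singleton'' region $\{v\}$ is the correct one: the original argument of \cite{feige-raghavan-peleg-upfal:1994:noisy} assumed the target was always a leaf (and in particular never equal to the queried vertex), so one must verify that treating $\{v\}$ symmetrically with the subtree regions in the weight update still produces the required constant-sized expected decrease in $\Phi$. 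The remaining constants and concentration bookkeeping follow the template of \cite{feige-raghavan-peleg-upfal:1994:noisy}.
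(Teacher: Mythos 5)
There is a genuine gap in the drift estimate, and it occurs precisely in the regime your stopping rule needs to reach. With your potential $\Phi_t = \log(W^{(t)}/w^{(t)}_{\Target})$, a correct response at a weighted centroid $v$ whose correct region $R$ has weight $\alpha W$ changes $\Phi$ by $\log\bigl(2p\alpha + 2(1-p)(1-\alpha)\bigr) - \log(2p)$. This is at most $-\log(2p)$ (a constant) only while $\alpha \leq \Half$. But your stopping rule requires $w_{\Target} \geq (1-\Err/2)W$, so the algorithm must pass through states where the target itself is the weighted centroid and carries weight $\alpha W$ with $\alpha$ close to $1$; there the decrease on a correct response is $\log\bigl(\alpha + \tfrac{1-p}{p}(1-\alpha)\bigr) \to 0$ as $\alpha \to 1$. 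The per-step drift thus degenerates from a constant to $O(\Phi_t)$ exactly in the endgame, so ``supermartingale with constant negative drift plus Azuma'' does not carry you to the threshold. Relatedly, the translation of the threshold is reversed: $w_{\Target} \geq (1-\Err/2)W$ is equivalent to $\Phi_T \leq \log\frac{1}{1-\Err/2} = O(\Err)$, not $\Phi_T \leq \log(2/\Err)$. You also do not rule out the stopping rule firing early on a wrong vertex. This endgame/verification difficulty is exactly why the multiplicative-weights scheme (which is the algorithm of \cite{2016:binary-search}, not of \cite{feige-raghavan-peleg-upfal:1994:noisy} as you attribute it) is invoked in the paper only to shrink the candidate set, and why used alone it incurs the worse $\log^2$ dependence the paper is trying to avoid.

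The paper's proof uses a structurally different algorithm --- the actual Feige et al.\ procedure: a walk on \Tree that queries a current vertex $\Query$, increments a counter $\Counter{\Query}$ when the response asserts $\Query$ is the target, decrements it (or moves to the indicated neighbor when the counter is zero) otherwise. The potential is additive, $\Potential[\Target]{i} = \Dis{\Target}{\Query[i]} - \Counter[i]{\Target} + \sum_{\VertexP \neq \Target}\Counter[i]{\VertexP}$, and the key point is that it decreases by exactly $1$ on \emph{every} correct response and increases by at most $1$ on every incorrect one, uniformly over the state --- there is no degenerate endgame, because the counter lets the potential go arbitrarily negative as confirmations accumulate. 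A single Hoeffding bound on the number of correct responses among $T = O(\Diameter + \log(1/\Err))$ queries then shows $\Potential[\Target]{T} < 0$, which forces $\Counter[T]{\Target} > 0$ and hence that the returned vertex is \Target. If you want to salvage a weight-based argument, you would need a potential whose drift stays bounded away from zero near the boundary, e.g.\ $\log\bigl((W - w_{\Target})/w_{\Target}\bigr)$, and you would still have to handle early stopping on a wrong vertex; as written, your argument does not establish the lemma.
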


Using Lemmas~\ref{lem:multiplicative-weights} and \ref{lem:feige},
it is easy to see how to find a target in $O(\log n + \log(1/\Err))$ queries.
Set $\Err' = \Err/2$ and apply Lemma~\ref{lem:multiplicative-weights}.
Within $O(\log n + \log(1/\Err')) = O(\log n + \log(1/\Err))$ iterations,
we obtain a set $S$ of at most 
$O(\log n + \log(1/\Err))$ nodes that contains the target
with probability at least $1-\Err'$.
Then run the algorithm of Lemma~\ref{lem:feige} with parameter $\Err'$
on the tree \TreeI{S}.
If the target was in $S$, the algorithm will find it with probability
at least $1-\Err'$.
Since the height of \TreeI{S} is at most
$\SetCard{S} = O(\log n + \log(1/\Err))$, the number of queries
required in this phase is at most $O(\log n + \log(1/\Err))$.
Thus, the total number of queries is $O(\log n + \log(1/\Err))$,
and the failure probability is at most $2\Err' = \Err$.

%% Conclusion and Open Problems

\section{Conclusion and Open Problems}\label{sec:conclusion}
\label{sec:generalized-hierarchical-clustering}

We presented adaptive algorithms for learning a hierarchical clustering
from $O(n \log n)$ ordinal queries when query responses are noisy.
In the absence of noise,
there is an upper bound of $n \log_2 n$ (Section~\ref{sec:insertion-clean}),
and we gave a lower bound of $n \log_3 n - O(n)$.
An immediate question is whether the constant in $n \log_2 n$ is
correct in the noiseless case, or could be improved.

For all our results, we assumed that the tree \Tree capturing
the hierarchical clustering is \emph{binary}.
When nodes can have larger degrees,
our analysis does not continue to hold.
One problem is that for an ordinal query, another response of \Null
may be required: when $\ElS, \ElSP, \ElSPP$ are leaves in three
distinct subtrees of the same internal node \Vertex, 
no pair of them is closer to each other than the third.
This issue also manifests itself in a higher query complexity,
as pointed out by~\cite{benson-kumar-tomkins:2016:nested-IIA}:
when \Tree is a star in which one leaf has been replaced with
an internal node with two new leaves, and the assignment of \elements
to leaves is uniformly random,
it is not difficult to show that
any non-adaptive or adaptive algorithm (even randomized)
requires $\Omega(n^2)$ ordinal queries to learn \Tree.

When the maximum degree \MaxChildren of \Tree is not a constant,
Pearl and Tarsi~\cite{pearl-tarsi:1986:strucuting}
pointed out that their \InsertionSort-based algorithm
can be generalized to obtain an upper bound of
$O(\MaxChildren \cdot n \log n)$ on the number of queries.
However, this generalized algorithm is not optimal:
in the context of discovering phylogeny trees,
Brodal et al.~%
\cite{brodal-fagerberg-pedersen-ostlin:2001:phylogeny-higher-degree}
present an algorithm that uses $O(\frac{d}{\log d} \log n)$ queries.
Their algorithm can be rederived
by adapting an algorithm of \cite{2016:binary-search}
for the \emph{edge query model}
(using techniques similar to the ones
in Section~\ref{sec:insertion-clean})
to insert a new \element \ElS into \Tree using
$O(\frac{d}{\log d} \cdot \log n)$ ordinal queries.
An interesting open question is whether
there is a matching lower bound
of $\Omega(\frac{d}{\log d} \cdot n \log n)$
for every \MaxChildren.
For $\MaxChildren = 2$ (and more generally $\MaxChildren = O(1)$),
our counting argument from Section~\ref{sec:quick-sort}
gives a matching lower bound
up to a constant factor, and for $\MaxChildren = n-1$
(and more generally $\MaxChildren = \Omega(n)$),
the example mentioned in the preceding paragraph does.
We do not have a matching lower bound for intermediate \MaxChildren.
Another natural question is whether the upper bound of
$O(\frac{\MaxChildren}{\log \MaxChildren} \cdot n \log n)$ ordinal queries
can be achieved in the noisy model.

A different interpretation of our result
is that we show how to learn
a (very restricted) ultrametric on the \elements in \AllElements
using ordinal queries.
This raises the natural question of how well more general metrics can
be learned using ordinal queries.
Indeed, as mentioned earlier,
some past work has considered this question,
with some assumption on the geometry of the metric
(e.g.,~\cite{jain-jamieson-nowak:2016:ordinal-embedding}).
Naturally, even for very simple metrics (such as points on a line),
we cannot hope to learn all distances accurately:
for example, if one \element is further from all other \elements
than any other pair is from each other, no algorithm can learn its distance
to other \elements to within any approximation guarantee.
Is there a natural ordinal approximation to
a (not necessarily Euclidean) metric that can be
obtained efficiently?
Could techniques such as
\cite{bartal:1996:probabilistic-approximations,%
bartal:1998:tree-metrics,%
fakcharoenphol-rao-talwar:2004:tree-metrics}
of probabilistic tree embeddings be useful in this context?

%% Acknowledgments

\subsubsection*{Acknowledgments}
Work support by NSF grant 1619458.
We would like to thank Daniel Hsu
for pointing us to \cite{pearl-tarsi:1986:strucuting},
and Dan Brown and Jakub Truszkowski
for informing us about
\cite{brown-truszkowski:2011:phylogeny-quartet,
brown-truszkowski:2011:phylogeny-quartet-practical}.
Last but not least, we thank Sanjoy Dasgupta,
Shanghua Teng, and anonymous reviewers
for useful feedback and suggestions.

%\newpage

\bibliographystyle{abbrv}
\bibliography{../davids-bibliography/names,../davids-bibliography/conferences,../bibliography/references}
%\newpage
\appendix

%\input{resources/simulate-query}

%% Proof of Theorem~\ref{thm:hardness-non-adaptive}

\section{Proof of Theorem~\ref{thm:hardness-non-adaptive}}
\label{sec:non-adaptive-lower-bound}

Here, we prove Theorem~\ref{thm:hardness-non-adaptive},
restated for convenience.

\begin{rtheorem}{Theorem}{\ref{thm:hardness-non-adaptive}}
Any non-adaptive algorithm requires $\Omega(n^3)$ ordinal queries
to learn the hierarchical clustering over $n$ \elements
(even in the absence of noise).
\end{rtheorem}

\begin{proof}
A (possibly randomized) non-adaptive algorithm poses \NumQuery queries
and receives their responses all at once.
Let $n$ be a power of $2$ and generate a ground truth
hierarchical clustering \TreeOpt as a full binary tree
in which the leaves form
a uniformly random permutation of the \elements. 
For every constant $\Err > 0$,
we prove that if the algorithm uses fewer than
$n (n - 1)(n - 2) (1 - \Err) / 24 = \Omega(n^3)$ ordinal queries,
then it fails to uniquely identify the clustering
with probability at least $1 - \Err$.

%% figure
\begin{figure}[htb]
\begin{center}
\includegraphics{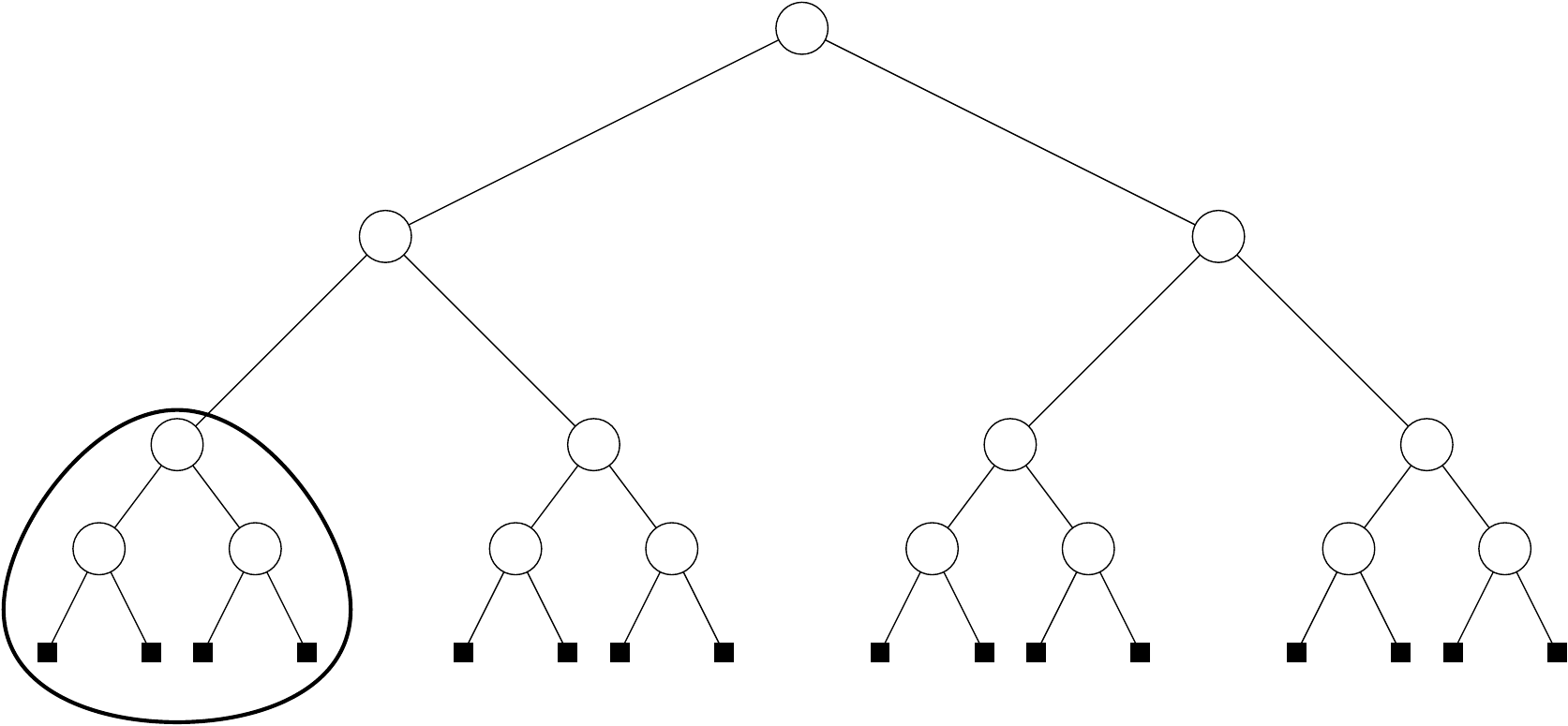}
\end{center}
\caption{A full binary tree and one of its clusters of size 4. \label{fig:complete-perfect-four}}
\end{figure}
%% end of figure

Because \TreeOpt is a full binary tree,
there are exactly $n/4$ disjoint clusters of size 4 each
(Figure~\ref{fig:complete-perfect-four}).
Let \Cluster be one of them.
The 4 \elements in \Cluster are partitioned into two clusters
of size 2 each.
The algorithm can learn this internal structure of \Cluster
only if one of its queries consists of 3 (out of the 4) \elements in \Cluster.
Because the \elements are uniformly randomly shuffled,
the probability that the 3 \elements of any one query all lie in
\Cluster is $\frac{24}{n (n - 1)(n - 2)}$.

Because there are $n/4$ such clusters \Cluster, 
if the algorithm queries \NumQuery triplets,
in expectation, it learns the internal structure of at most
$\NumQuery \cdot \frac{n}{4} \cdot \frac{24}{n (n - 1)(n - 2)}
= \frac{6\NumQuery}{(n-1)(n-2)}$
clusters of size 4.
If the algorithm succeeds with probability at least $1 - \Err$,
then it learns the internal structure of at least
$(1 - \Err)\frac{n}{4}$ size-4 clusters in expectation.
Thus, $\frac{6\NumQuery}{(n-1)(n-2)} \geq (1 - \Err)\frac{n}{4}$
which implies $\NumQuery \geq (1 - \Err) n (n - 1) (n - 2) / 24$.
\end{proof}

%% Proof of Lemma~\ref{lem:feige}

\section{Proof of Lemma~\ref{lem:feige}}
\label{sec:feige}

In this section, we provide the proof of Lemma~\ref{lem:feige}.
We restate the lemma here for convenience.

\begin{rtheorem}{Lemma}{\ref{lem:feige}}
Assume that each query response is correct
independently with probability $p > \Half$,
and adversarially incorrect with probability $1 - p$.
There exists an adaptive algorithm with the following property:
Given a tree \Tree of diameter \Diameter, and error parameter $\Err > 0$,
the algorithm finds the target with probability at least $1 - \Err$,
using at most $O(\Diameter + \log(1/\Err))$ vertex queries.
\end{rtheorem}

Our algorithm is a very minor modification of an algorithm 
of Feige et al.~\cite{feige-raghavan-peleg-upfal:1994:noisy}.
The algorithm performs a walk on the tree \Tree:
in each iteration, a vertex \Vertex is queried,
and if the response points to a neighbor \VertexP,
the algorithm next ``moves to'' \VertexP.
However, with every response confirming \Vertex as the target,
the algorithm moves ``deeper'' into \Vertex
(by incrementing a counter):
subsequently, responses pointing to neighbors of \Vertex will only
move the walk one step ``shallower'' in \Vertex,
and the walk will only leave \Vertex once the counter has been reduced
down to 0.

%Let \Query[i] be the \Ordinal{i} vertex that our algorithm queries.
%Our algorithm starts with querying an arbitrary node,
%i.e., \Query[1] is arbitrary.

More formally, each vertex \Vertex in \Tree has a counter
\Counter{\Vertex}.
All the counters are initially $0$,
and at any point of the algorithm, at most one counter will have a
non-zero (positive) value.

In the \Ordinal{i} iteration, the algorithm queries a vertex \Query[i].
The response to the query is either the node \Query[i] itself
or one of its neighbors in \Tree.
If the query response is \Query[i], i.e.,
the noisy response proposes \Query[i] as the target,
then the algorithm increments \Counter{\Query[i]}
and will next query the same node $\Query[i+1] = \Query[i]$.
If the query response is a node other than \Query[i],
the algorithm's action depends on \Counter{q_i}.
If $\Counter{\Query[i]} > 0$,
the algorithm decrements \Counter{\Query[i]}
and will next query the same node $\Query[i+1] = \Query[i]$.
Finally, if $\Counter{q_i} = 0$, then $\Query[i + 1]$ is simply the
vertex given in response to the query.
The algorithm is given formally as Algorithm~\ref{alg:feige}.

\InsertAlgorithmInOneColumn{\TreeWalk $(\Tree, \Err)$}%
{alg:feige}{
\STATE{$\Counter{\Vertex} \AlgAssign 0$ for every vertex \Vertex in \Tree.} 
\STATE{Let \Query be an arbitrary vertex in \Tree.}
\FOR{$i = 1, \ldots
\max(\frac{2 (D+1)}{2p - 1}, \frac{8 \ln(1/\Err)}{(2p - 1)^2})$}
\STATE{Query the node \Query, and let \Response be the noisy response.}
\IF{$\Response = \Query$}
\STATE{$\Counter{\Query} = \Counter{\Query} + 1$.}
\ELSIF{$\Counter{\Query} > 0$}
\STATE{$\Counter{\Query} = \Counter{\Query} - 1$.}
\ELSE
\STATE{$\Query \AlgAssign \Response$.}
\ENDIF
\ENDFOR
\RETURN{\Query}
}

We use \Counter[i]{\Vertex} to denote 
the value of the counter \Counter{\Vertex} after the \Ordinal{i} iteration.
Let \Dis{\Vertex}{\VertexP} denote the distance between \Vertex and \VertexP
in \Tree, and define the \emph{potential value} of \Vertex in iteration $i$
as
\begin{align*}
\Potential[\Vertex]{i}
& = \Dis{\Vertex}{\Query[i]} - \Counter[i]{\Vertex}
+ \sum_{\VertexP \neq \Vertex} \Counter[i]{\VertexP}.
\end{align*}

The following lemma states that the
potential of the (unknown) target node must decrease
with each correct response to a query
(and may increase with incorrect responses).

\begin{lemma} \label{lem:potential}
Let \Target be the (unknown) target.
For every $i$, if the response to the \Ordinal{i} query is correct,
then $\Potential[\Target]{i + 1} = \Potential[\Target]{i} - 1$;
if the response is incorrect, then
$\Potential[\Target]{i + 1} \leq \Potential[\Target]{i} + 1$.
\end{lemma}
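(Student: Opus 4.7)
The plan is to prove the lemma by a direct case analysis on the three possible actions of one iteration of Algorithm~\ref{alg:feige}. For each case I would track three separate contributions to $\Potential[\Target]{i+1} - \Potential[\Target]{i}$: (a) the change $\Dis{\Target}{\Query[i+1]} - \Dis{\Target}{\Query[i]}$ in the distance term, (b) the change $\Counter[i]{\Target} - \Counter[i+1]{\Target}$ from the negative self-counter term, and (c) the change $\sum_{\VertexP \neq \Target}(\Counter[i+1]{\VertexP} - \Counter[i]{\VertexP})$ in the sum of all other counters. Correctness of the response is determined by whether $\Target = \Query[i]$ (in which case the unique correct response is $\Query[i]$ itself) or $\Target \neq \Query[i]$ (in which case the unique correct response is the neighbor of $\Query[i]$ on the $\Query[i]$--$\Target$ path in \Tree).

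The three cases are: (A) $\Response[i] = \Query[i]$, where $\Counter[i+1]{\Query[i]} = \Counter[i]{\Query[i]} + 1$ and $\Query[i+1] = \Query[i]$; (B) $\Response[i] \neq \Query[i]$ but $\Counter[i]{\Query[i]} > 0$, where $\Counter[i+1]{\Query[i]} = \Counter[i]{\Query[i]} - 1$ and $\Query[i+1] = \Query[i]$; and (C) $\Response[i] \neq \Query[i]$ and $\Counter[i]{\Query[i]} = 0$, where all counters are unchanged and $\Query[i+1] = \Response[i]$. In case (A) the query point does not move, so only terms (b) and (c) contribute: if $\Target = \Query[i]$ (correct) the self-counter grows by $1$ and the potential drops by exactly $1$; if $\Target \neq \Query[i]$ (incorrect) the increment lands in the other-counter sum, so the potential rises by exactly $1$. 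In case (C) the counters do not move, so only term (a) contributes; because $\Response[i]$ is a neighbor of $\Query[i]$ in the tree, $\Dis{\Target}{\Response[i]} = \Dis{\Target}{\Query[i]} - 1$ precisely when $\Response[i]$ lies on the $\Query[i]$--$\Target$ path (a correct response), giving a change of $-1$, and $\Dis{\Target}{\Response[i]} = \Dis{\Target}{\Query[i]} + 1$ otherwise (incorrect), giving a change of $+1$.

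Case (B) is the only one requiring a tiny bit of care: the identity of $\Response[i]$ is irrelevant to the state update, so the change in potential is determined solely by whether $\Target = \Query[i]$. If $\Target = \Query[i]$ (necessarily an incorrect response, since the correct response would be $\Query[i]$ itself) the self-counter is decremented, and direct computation gives $\Potential[\Target]{i+1} - \Potential[\Target]{i} = +1$. If $\Target \neq \Query[i]$ the decremented counter falls inside the other-counter sum, so the potential decreases by exactly $1$. In the latter sub-case, the potential drops by $1$ regardless of whether $\Response[i]$ happens to point along the correct path; this is consistent with both conclusions of the lemma, because a correct response demands the exact decrease by $1$ (which we get), while an incorrect response merely demands $\Potential[\Target]{i+1} \leq \Potential[\Target]{i} + 1$ (which a drop of $1$ certainly satisfies).

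The main obstacle is purely clerical: it is easy to mis-attribute a unit of counter change to the self-term versus the other-counter sum, and easy to conflate ``response is correct'' with ``algorithm follows the correct edge.'' The key point to isolate and highlight is that in case (B) the algorithm does not actually move in the direction of $\Response[i]$, so the correctness of the response and the effect on the potential decouple, and one only needs the weak inequality on the incorrect side. No tree property is used beyond the fact that any two neighbors differ in distance to $\Target$ by exactly $\pm 1$, which is immediate for trees.
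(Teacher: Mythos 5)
Your proof is correct and is essentially the paper's own argument: both are exhaustive case analyses tracking the three terms of the potential, with yours indexed by the algorithm's branch (increment / decrement / move) and the paper's indexed first by whether $\Query[i]$ equals the target. The point you highlight in case (B) --- that when the counter is positive the state update ignores which neighbor was named, so a correct response still yields the required exact decrease of $1$ via the other-counter sum --- is the same observation the paper makes when it notes that a correct response at $\Query[i]\neq\Target$ decrements $\Counter{\Query[i]}$ rather than moving the walk.
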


\begin{emptyproof}
We distinguish two cases, based on the node queried in the \Ordinal{i} iteration.
\begin{itemize}
\item If $\Query[i] = \Target$, then a correct response increments 
\Target's counter, and thus decreases \Target's potential by 1.
An incorrect response either decrements \Target's counter
or --- if the counter was 0 --- moves the query node to one of
\Target's neighbors.
In either case, \Target's potential is only increased by 1.
\item If $\Query[i] \neq \Target$, then a correct response points
towards \Target. 
If $\Counter[i]{\Query[i]} > 0$, then \Counter{\Query[i]} is decremented;
if $\Counter[i]{\Query[i]} = 0$,
then the new query $\Query[i+1] = \Response[i]$ is made to a node one
step closer to \Target.
Therefore, \Target's potential is decreased by 1.

An incorrect response could increment the counter of \Query[i]
(if it points to \Query[i]),
decrement the counter of \Query[i]
(if it points to an incorrect neighbor of \Query[i] and
$\Counter[i]{\Query[i]} > 0$),
or move the next query $\Query[i+1] = \Response[i]$ to an incorrect
neighbor of \Query[i] that is one step further away from \Target.
In all three cases, the potential of \Target increases or decreases by 1;
in particular, it at most increases by 1.\QED
\end{itemize}
\end{emptyproof}

\begin{extraproof}{Lemma~\ref{lem:feige}}
Algorithm~\ref{alg:feige} runs for \NumQuery iterations.
In expectation, a $p$ fraction of the query responses are correct.
Let $\epsilon = \frac{2p - 1}{4} < p$.
Using Hoeffding's inequality \cite{hoeffding:1963:probability},
with probability at least $1 - e^{- 2 \NumQuery \epsilon^2}$,
at least $(p - \epsilon) \NumQuery$ query responses are correct.
In that case, by Lemma~\ref{lem:potential},
\begin{align}
\Potential[\Target]{\NumQuery}
& \leq \Potential[\Target]{0} - (p - \epsilon) \NumQuery +
(1- p + \epsilon) \NumQuery \nonumber
\\ & = \Dis{\Target}{\Query[0]} - (2p - 1 - 2\epsilon) \NumQuery
\leq D - \frac{2p-1}{2} \NumQuery. \label{equ:potential}
\end{align}

Because $\NumQuery \geq \frac{\ln(1/\Err)}{2 \epsilon^2}$,
the success probability is at least
$1 - e^{- 2 \epsilon^2 \NumQuery} \geq 1 - \Err$.
And because $\NumQuery \geq \frac{2 (D+1)}{2p - 1}$,
Inequality~\ref{equ:potential} implies that
$\Potential[\Target]{\NumQuery} \leq D - (D+1) < 0$.
Therefore, by definition of the potential value,
$\Counter[\NumQuery]{\Target} > 0$.
The only node that could have a positive counter
in any iteration $i$ is \Query[i];
therefore, the returned node must be \Target in the high-probability case.
\end{extraproof}

%%%%%%%%%%%%%%%%%%%%%%%%%%%%%%%%%%%%%%%%%%%%%%%%%%

\end{document}